\documentclass[11pt]{article}

\usepackage[letterpaper,margin=0.92in]{geometry}
\usepackage[T1]{fontenc}
\usepackage{lmodern}
\usepackage{microtype}
\usepackage{amsmath,amssymb,amsthm,mathtools}
\usepackage{booktabs,array}
\usepackage{float}
\usepackage{tikz}
\usetikzlibrary{arrows.meta,calc,positioning}
\usepackage[colorlinks=true,allcolors=blue]{hyperref}

\newtheorem{theorem}{Theorem}
\newtheorem{proposition}[theorem]{Proposition}
\newtheorem{lemma}[theorem]{Lemma}
\newtheorem{corollary}[theorem]{Corollary}
\newtheorem{question}{Question}

\newtheorem{openquestion}{Open question}

\theoremstyle{definition}

\theoremstyle{remark}
\newtheorem{remark}[theorem]{Remark}

\newcommand{\C}{\mathbb C}
\newcommand{\T}{\{+1,-1\}}
\newcommand{\ket}[1]{\lvert #1\rangle}
\newcommand{\bra}[1]{\langle #1\rvert}
\newcommand{\proj}[1]{\lvert #1\rangle\!\langle #1\rvert}
\newcommand{\ip}[2]{\langle #1\mid #2\rangle}
\newcommand{\Tr}{\operatorname{Tr}}
\newcommand{\supp}{\operatorname{supp}}
\newcommand{\MS}{\mathrm{MS}}
\newcommand{\CalA}{\mathrm{Cal}_{A}}
\newcommand{\CalB}{\mathrm{Cal}_{B}}
\newcommand{\Tw}{\mathrm{CHSH'}}

\newcommand{\cans}[1]{\mathfrak c_{#1}}
\newcommand{\rans}[1]{\mathfrak r_{#1}}

\allowdisplaybreaks

\title{The right entanglement for perfect nonlocality and zero-error communication}
\author{Laura Man\v{c}inska\\[0.3em]\small Department of Mathematical Sciences, University of Copenhagen}
\date{}

\begin{document}
\maketitle

\begin{abstract}    
We identify a structural mechanism by which perfect play in a nonlocal game requires and self-tests a specific non-maximally entangled state. To this end, we construct a Flagged Magic-Square game by extending the standard Magic-Square game with an additional ``flag'' outcome and two additional questions for each player. The game admits a perfect strategy using a state with Schmidt coefficients proportional to $(1,\sqrt{2},\sqrt{2},\sqrt{2},\sqrt{2})$. We give a direct algebraic proof that maximally entangled strategies have winning probability bounded away from one, independently of dimension. Moreover, the game self-tests this state, providing an example of a two-party pseudo-telepathy game that certifies a specific non-maximally entangled state. Our construction illustrates how adding suitable constraints to a rigid measurement geometry can enforce specific Schmidt ratios, suggesting a modular route to further self-tests based on other rigid pseudo-telepathy games.

We establish a general game-to-channel reduction that transfers separations in the game model to one-shot zero-error communication over classical channels. Applied to the Flagged Magic-Square game, it yields a classical channel for which entanglement permits the zero-error transmission of ten messages in a single use, whereas any finite-dimensional maximally entangled resource permits at most nine. Thus, maximally entangled states are not a universally optimal entanglement resource for zero-error communication.

\end{abstract}

\tableofcontents
\newpage
\section{Introduction}\label{sec:Intro}

The phenomenon of ``more nonlocality with less entanglement'' has long illustrated that maximally entangled states need not be optimal for nonlocal tasks~\cite{AcinGillGisin2005,VidickWehner2011}. For a long time, however, it remained unclear whether such a separation could persist in the regime of perfect completeness. In particular, could a two-party pseudo-telepathy game~\cite{BrassardBroadbentTapp2005} be won with certainty using entanglement, but not using a maximally entangled state of any dimension~\cite{Mancinska2014}? Recent work has answered this question and shown that non-maximally entangled states can support pseudo-telepathy~\cite{Renner}, and, moreover, that such states can be necessary for perfect play~\cite{Lalonde}.

The existence of such games raises further questions about what perfect play can enforce. Our starting point is the search for a \emph{natural} construction that isolates the mechanism making non-maximal entanglement necessary for perfect play and provides a template for further applications. We then proceed to ask whether perfect play can certify a particular non-maximal entanglement spectrum, how to construct such games systematically, and whether the resulting resource requirements have consequences for zero-error communication over classical channels. We organize our investigation around the following three questions.

\begin{question}\label{q:self-testing}
Can a two-party pseudo-telepathy game self-test a non-maximally entangled state?
\end{question}
Although every finite-dimensional pure bipartite entangled state can be self-tested~\cite{ColadangeloGohScarani2017}, self-tests based on perfect play in nonlocal games can offer additional advantages. One example is parallel repetition, which provides a natural route to certifying tensor powers while preserving perfect completeness. Indeed, suppose perfect play in a game $G$ self-tests a state $|\psi\rangle$ together with rank-one projective measurements. Winning its $n$-fold parallel repetition with certainty forces each coordinate to attain the optimal single-instance score. The parallelization theorem of \v{S}upi\'c, Cavalcanti, and Bowles~\cite[Theorem~4]{SCB21} then implies that $G^{\otimes n}$ self-tests $|\psi\rangle^{\otimes n}$. Thus, a single such game yields perfectly complete self-tests of arbitrarily many copies of the certified state.

\begin{question}\label{q:construction}
What structural mechanism forces a particular non-maximal entanglement spectrum, and how can it be used to construct self-tests?
\end{question}
There is a basic obstruction to self-testing every pure bipartite entangled state through pseudo-telepathy: finite games specify only countably many perfect-winning conditions, whereas pure bipartite entangled states have uncountably many inequivalent Schmidt spectra. The goal is therefore to develop a general construction principle and understand which entanglement spectra it can certify.

\begin{question}\label{q:zero-error}
Are maximally entangled states sufficient for optimal one-shot zero-error communication over noisy classical channels?
\end{question}
Historically, zero-error communication has long motivated the question of whether maximally entangled states suffice for perfect play. Indeed, for a fixed classical channel and a target number of messages, zero-error encoding and decoding strategies (see Figure~\ref{fig:zero-error-setting}) correspond to perfect strategies for a particular nonlocal game~\cite{CubittLeungMatthewsWinter}. However, the recent separation between maximally and non-maximally entangled resources~\cite{Lalonde} does not, by itself, establish a separation within this particular class of games or answer \autoref{q:zero-error}.

\begin{figure}[htbp]
\centering
\begin{tikzpicture}[
    >=Stealth,
    font=\small,
    operation/.style={draw, rounded corners=2pt, align=center,
        minimum width=2.1cm, minimum height=1cm},
    classical/.style={->, double, double distance=0.8pt, line width=0.4pt},
    quantum/.style={->, line width=0.7pt}
]
\node[operation] (alice) at (-3.7,0) {Alice\\encoding};
\node[operation, minimum width=2.5cm, fill=black!4] (channel) at (0,0)
    {Noisy classical\\channel $\mathcal N$};
\node[operation] (bob) at (3.7,0) {Bob\\decoding};
\node[operation, minimum width=2.8cm, minimum height=0.85cm]
    (resource) at (0,1.9) {Shared entanglement\\$\rho_{AB}$};
\node[anchor=east] at (-5.7,0) {$m\in[M]$};
\node[anchor=west] at (5.7,0) {$\widehat m$};
\draw[classical] (-5.7,0) -- (alice.west);
\draw[classical] (alice.east) -- node[above] {$x$} (channel.west);
\draw[classical] (channel.east) -- node[above] {$y$} (bob.west);
\draw[classical] (bob.east) -- (5.7,0);
\draw[quantum] (resource.west) -| node[pos=0.8,left] {$A$} (alice.north);
\draw[quantum] (resource.east) -| node[pos=0.8,right] {$B$} (bob.north);
\end{tikzpicture}
\caption{One-shot entanglement-assisted zero-error communication~\cite{CubittLeungMatthewsWinter}. Using their shares of $\rho_{AB}$, Alice encodes $m$ as a channel input $x$, and Bob decodes the output $y$. Zero error requires $\widehat m=m$ with certainty for every $m\in[M]$. Double lines carry classical data, and single lines supply the quantum shares.}
\label{fig:zero-error-setting}
\end{figure}

\subsection{Our contributions and ideas}\label{sec:contributions}

We give a transparent and modular construction to explain and certify the necessity of non-maximal entanglement for perfect play. We also establish a general game-to-channel reduction that transfers such entanglement requirements to one-shot zero-error communication over classical channels. Below, we describe the individual contributions, highlighting the key ideas and their relation to the three questions above.

\paragraph{A natural example with an explicit impossibility proof: Flagged Magic-Square game.}
Many of our contributions stem from the introduction of the Flagged Magic-Square game $G_{\mathrm{FMS}}$, which extends the familiar Magic-Square game by adding one additional ``flag'' answer and two additional ``bridge'' questions for each player. Our game admits a perfect strategy with Schmidt coefficients proportional to
\[
    (1,\sqrt{2},\sqrt{2},\sqrt{2},\sqrt{2}),
\]
where the uniform Schmidt spectrum over the last four dimensions corresponds to the two EPR pairs needed for the usual Magic-Square game. Our game provides a natural counterpart to the first reported example, obtained by computer search in~\cite{Lalonde}.

In Theorem~\ref{thm:FMS-separation} we show that
\[
    \omega^*(G_{\mathrm{FMS}})=1>
    \omega_{\mathrm{ME}}(G_{\mathrm{FMS}}),
\]
where $\omega_{\mathrm{ME}}$ is the supremum winning probability over all strategies using a maximally entangled state of arbitrary dimension. We give a direct algebraic proof that no maximally entangled state can enable perfect play. This complements the impossibility proof in~\cite{Lalonde}, which was based on a computer-generated certificate from the tracial NPA hierarchy.

The core idea is as follows. We start with a rigid pseudo-telepathy game, in our case Magic-Square. The calibration constraints (see Section~\ref{sec:game}) link eight selected outcomes of the added bridge questions to four distinguished Magic-Square measurement directions $c_0,c_1,r_0,r_1$ whose cross-overlaps satisfy
\begin{equation}
    \langle c_x \mid r_y \rangle
    =
    \frac{(-1)^{xy}}{\sqrt{2}}.
\label{eq:overview-overlaps}
\end{equation}
More precisely, the game constraints force the eight selected bridge measurement directions to align with the balanced superpositions
\begin{equation*}
    \ket{t_x^a}:=\frac{\ket\bot+a\ket{c_x}}{\sqrt2},
    \qquad
    \ket{u_y^b}:=\frac{\ket\bot+b\ket{r_y}}{\sqrt2},
\end{equation*}
where $x,y\in\{0,1\}$ index the bridge questions and $a,b\in\{-1,+1\}$ label their selected outcomes. In our game, the answer pair $(a,b)\in\{-1,+1\}^2$ is defined to be losing exactly when $ab = - (-1)^{xy}$. Note that
\[\langle t_x^a \mid u_y^b\rangle = \frac{1}{2} (1 + ab \langle c_x \mid r_y\rangle)
= \frac{1}{2} \left(1 + \frac{ab(-1)^{xy}}{\sqrt 2}\right).\]
 Hence, the vectors corresponding to losing sign pairs are not orthogonal in the ordinary inner product, which is precisely what obstructs perfect play with a maximally entangled state. By contrast, for a five-dimensional state with Schmidt coefficients proportional to $(1,\sqrt{2},\sqrt{2},\sqrt{2},\sqrt{2})$, the corresponding \emph{weighted} inner products vanish:
\[
    \left(
        1+\textcolor{red}{\sqrt{2}}\,ab\,\langle c_x \mid r_y\rangle
    \right)
    =
    0
\]
for every losing sign pair. Thus our construction has a simple interpretation: rigidity fixes an overlap of magnitude $1/\sqrt{2}$, while the Schmidt ratio $\sqrt{2}$ is precisely what is needed to cancel it.

\paragraph{A modular construction for certifying entangled states via pseudo-telepathy games.}
In Theorem~\ref{thm:self-test} we show that the Flagged Magic-Square game exactly self-tests both the state with Schmidt coefficients proportional to $(1,\sqrt2,\sqrt2,\sqrt2,\sqrt2)$ and the measurements of the strategy in Section~\ref{sec:honest-strategy}. This answers \autoref{q:self-testing} affirmatively. 

The same mechanism yields a modular construction, addressing \autoref{q:construction}. Magic-Square can be replaced by any \emph{suitable} pseudo-telepathy game $G$ that exactly self-tests a projective strategy using a maximally entangled state of local dimension $d$. The canonical measurements must contain rank-one outcomes along unit vectors $\ket{v_0},\ket{v_1}$ for Alice and the complex conjugates of $\ket{w_0},\ket{w_1}$ for Bob. The two directions on either side need not belong to the same measurement, but their phases must be chosen so that, for some $c>0$,
\begin{equation}
    \ip{v_x}{w_y}=c\,s_{xy},
    \qquad
    s_{xy}=(-1)^{xy},
    \qquad
    x,y\in\{0,1\}.
\label{eq:general-flagging-overlaps}
\end{equation}
For finite-dimensional perfect strategies with a pure shared state, the flagged game then exactly self-tests a state of local dimension $d+1$ with Schmidt coefficients proportional to $(1,c^{-1},\ldots,c^{-1})$, where $c^{-1}$ occurs $d$ times (see Appendix~\ref{app:flagged-base-games}).

For example, we can take the rigid game associated with the Kochen--Specker set CK-31~\cite{CK31}. Its canonical strategy includes the measurement directions
\[
    \ket{v_0}=\frac{(2,-1,0)}{\sqrt5},\qquad
    \ket{v_1}=\frac{(0,2,1)}{\sqrt5},\qquad
    \ket{w_0}=\frac{(1,0,2)}{\sqrt5},\qquad
    \ket{w_1}=\frac{(1,0,-2)}{\sqrt5},
\]
which satisfy~\eqref{eq:general-flagging-overlaps} with $c=2/5$\footnote{Other choices of vectors from CK-31 give additional choices for $c$.}. The flagged game therefore self-tests a non-maximally entangled state of local dimension four with Schmidt coefficients proportional to $(2,5,5,5)$. Our construction cannot yield a local-dimension-three example: it adds one flag dimension, and no two-player pseudo-telepathy game exists in local dimension two~\cite{BrassardMethotTapp}.

\textit{Concurrent related work.} While preparing this manuscript, we became aware of concurrent work by Ranyiliu Chen~\cite{Chen2026}. Starting from Lalonde's six-dimensional inner product game~\cite{Lalonde}, Chen constructs a reduced game which \emph{robustly} self-tests the state with Schmidt coefficients proportional to $(1,2,2,2,2)$. Thus his work also answers \autoref{q:self-testing} affirmatively.

\paragraph{From perfect nonlocality to optimal zero-error communication.}
Our game-to-channel reduction transfers perfect-play resource separations to zero-error communication. Given a two-party game $G$, we construct a channel whose confusability graph has local question--answer pairs as vertices. Answers to the same question form a clique, while cross-player edges join answers that lose on the corresponding question pair. Theorem~\ref{thm:game-channel-transfer} establishes that $G$ has a perfect strategy if and only if the channel admits an entanglement-assisted zero-error code transmitting $q=|X|+|Y|$ messages in one use. Here $X$ and $Y$ are the question sets for Alice and Bob, respectively.

More precisely, the correspondence relates perfect game strategies using the shared state $\ket{\psi}$ to $q$-message zero-error codes using $\ket{\chi}$, where
\[
    \ket{\psi}=\sum_r\lambda_r\ket{r,r},
    \qquad
    \ket{\chi}=
    \frac{\sum_r\sqrt{\lambda_r}\ket{r,r}}
         {\sqrt{\sum_r\lambda_r}}.
\]
Thus the code's Schmidt coefficients are the normalized square roots of the game's, preserving Schmidt rank and maximal entanglement. 

For Flagged Magic-Square, this gives a $50$-input classical channel that transmits $10$ messages perfectly with a suitable non-maximally entangled state, whereas any maximally entangled resource permits at most $9$ (Corollary~\ref{cor:FMS-zero-error}). Thus non-maximal entanglement can be necessary for optimal one-shot zero-error communication, answering \autoref{q:zero-error} negatively.

\subsection{Open questions}
\label{sec:discussion}
Our results open several directions for further investigation.

\begin{samepage}
\begin{openquestion}\label{oq:separation-amplification}
Does there exist a family of games $G_n$, each admitting a perfect finite-dimensional strategy, such that
\[
    \omega^*(G_n)=1
    \qquad\text{and}\qquad
    \omega_{\mathrm{ME}}(G_n)\longrightarrow 0?
\]
\end{openquestion}
\end{samepage}
A natural candidate is the parallel repetition $G_{\mathrm{FMS}}^{\otimes n}$, which preserves perfect completeness. However, it is not known whether parallel repetition drives the maximally entangled value of $G_{\mathrm{FMS}}$ to zero. Existing parallel-repetition results assume $\omega^*(G)<1$~\cite{YuenParallel}, and therefore do not settle this question.

Interestingly, for any fixed game $G$ admitting a perfect finite-dimensional strategy, $\omega_{\mathrm{ME}}(G^{\otimes n})$ cannot decay exponentially in $n$. Indeed, let $\ket{\psi}$ be the shared state of a perfect strategy for $G$, with Schmidt rank $d$. Grouping the Schmidt strings of $\ket{\psi}^{\otimes n}$ by type gives
\[
    \ket{\psi}^{\otimes n}
      =\sum_t\sqrt{p_t}\,\ket{\Phi_t},
\]
with at most $\binom{n+d-1}{d-1}$ terms, each maximally entangled on its support. The game operator $0\leq W\leq I$ of the product measurements has $\ket{\psi}^{\otimes n}$ as a $(+1)$-eigenvector. This implies
\[
    \bra{\Phi_t}W\ket{\Phi_t}
      \geq
      \bigl|\langle\Phi_t\mid\psi^{\otimes n}\rangle\bigr|^2
      =p_t.
\]
Choosing the largest-weight term therefore yields $\omega_{\mathrm{ME}}(G^{\otimes n})\geq\binom{n+d-1}{d-1}^{-1}$, and any decay to zero must respect this polynomial lower bound.

\begin{openquestion}\label{oq:schmidt-spectra}
Which Schmidt spectra can be self-tested by pseudo-telepathy games?
\end{openquestion}
Our previous counting argument limits the Schmidt spectra and ratios of self-tested states to a countable set. It would be interesting to identify other nontrivial structural restrictions on the attainable spectra and systematic ways to realize them. Our construction translates overlaps in rigid measurement geometries into certified Schmidt ratios, making the range of attainable overlaps a natural starting point.

\begin{openquestion}\label{oq:minimal-schmidt-rank}
What is the smallest Schmidt rank of a pure state that perfectly wins a two-party game $G$ with $\omega_{\mathrm{ME}}(G)<1$?
\end{openquestion}
The flagged construction based on the CK-31 Kochen--Specker game shows that Schmidt rank four is possible. So the question is whether such a separation can already occur with a Schmidt-rank-three state.

\section{Notation}\label{sec:notation}
\subsection{Entangled, maximally entangled, and tracial values}

For a nonlocal game $G$, let $\omega^*(G)$ be the supremum of the winning probability over finite-dimensional tensor-product strategies. The maximally entangled value $\omega_{\mathrm{ME}}(G)$ restricts this supremum to strategies using a shared pure state whose Schmidt coefficients are all equal and nonzero.

The unrestricted tracial value $\omega_{\mathrm{tr}}(G)$ is a commuting-operator analogue of the maximally entangled value. It is the supremum of the winning probability over correlations of the form
\[
p(a,b\mid x,y)=\tau(E_a^xF_b^y),
\]
where $\tau$ is a tracial state on a unital $C^*$-algebra and $\{E_a^x\},\{F_b^y\}$ are POVMs. No finite-dimensionality assumption is imposed on the algebra or on the GNS representation. Maximally entangled strategies of local dimension $d$ have this form with $\tau=\operatorname{Tr}/d$ after transposing Bob's effects in the Schmidt basis. For general tracial states, left and right multiplication in the GNS representation realize these correlations using commuting local measurements~\cite{TodorovTurowska}. Consequently,
\[
\omega_{\mathrm{ME}}(G)\le \omega_{\mathrm{tr}}(G).
\]
The tracial game value is attained by compactness of the tracial correlation class (see the arguments in~\cite{DykemaPaulsen} for the synchronous setting and~\cite{TodorovTurowska} for general POVMs).

\subsection{Magic-Square notation}\label{sec:MS-notation}

The Magic-Square game has several variants. We fix the one used here and introduce notation that will be useful for the flagged construction.

We use the following Mermin--Peres square on two qubits~\cite{Mermin,Peres}:
\begin{equation}
\begin{array}{c|ccc}
 & C_1&C_2&C_3\\ \hline
R_1& Z\otimes I&I\otimes Z&Z\otimes Z\\
R_2& I\otimes X&X\otimes I&X\otimes X\\
R_3& Z\otimes X&X\otimes Z&Y\otimes Y.
\end{array}
\label{eq:magic-square}
\end{equation}
The observables in each row or column commute.  Every row and column product is $+I$, except for the third column, whose product is $-I$.  Write
\begin{equation}
\kappa_1=\kappa_2=+1,
\qquad
\kappa_3=-1
\label{eq:column-parities}
\end{equation}
for the column parities.

In the Magic-Square game, Alice receives a row question $R_i$ and Bob receives a column question $C_j$. Their valid answers are sign triples $s,t\in\T^3$ satisfying $s_1s_2s_3=+1$ and $t_1t_2t_3=\kappa_j$, respectively. They win exactly when $s_j=t_i$.

We distinguish two answer strings for $R_1$ and two separately labeled answer strings for $C_1$:
\begin{equation}
\begin{aligned}
\rans{0}&=(+1,+1,+1),
&\rans{1}&=(+1,-1,-1),\\
\cans{0}&=(+1,+1,+1),
&\cans{1}&=(+1,-1,-1).
\end{aligned}
\label{eq:distinguished-answers}
\end{equation}
The Fraktur symbols emphasize that these are answer strings.  In the strategy defined by \eqref{eq:magic-square}, they correspond to the vectors
\begin{equation}
\ket{r_0}=\ket{00},
\quad
\ket{r_1}=\ket{01},
\qquad
\ket{c_0}=\ket{0+},
\quad
\ket{c_1}=\ket{0-}.
\label{eq:distinguished-vectors}
\end{equation}
Their overlap matrix is
\begin{equation}
\sqrt2\,\ip{c_x}{r_y}=(-1)^{xy},
\qquad x,y\in\{0,1\}.
\label{eq:distinguished-overlaps}
\end{equation}

\subsection{One-shot zero-error communication}
\label{sec:zero-error-preliminaries}

A finite classical channel $\mathcal N$ is specified by conditional probabilities $\mathcal N(o\mid v)$ of an output $o$ given an input $v$. Its \emph{confusability graph} $H$ has the channel inputs as vertices. Distinct inputs $v,w$ are adjacent exactly when $\mathcal N(o\mid v)\mathcal N(o\mid w)>0$ for some output $o$. For any graph $H$, write $v\simeq w$ if $v=w$ or $vw\in E(H)$.

In a one-shot entanglement-assisted code, Alice and Bob share a finite-dimensional bipartite state. To send a message $i\in\{1,\ldots,M\}$, Alice performs an encoding POVM whose outcome $v$ is used as the channel input. Bob chooses a decoding measurement according to the channel output and must recover $i$ with certainty for every message. The only communication is this single channel use.

By assigning a purification to one player if necessary, we may assume the shared state is pure. Throughout the zero-error communication setting, we take the local Hilbert spaces to be its Schmidt supports. Thus all Schmidt coefficients are positive and Bob's reduced state $\rho$ is positive definite.

We use the operator characterization of such codes from \cite{CubittLeungMatthewsWinter}. An $M$-message zero-error code exists if and only if there are a positive definite density operator $\rho$ and positive semidefinite operators $\{\beta_v^i\}$ on a common finite-dimensional Hilbert space satisfying
\begin{equation}
    \sum_{v\in V(H)}\beta_v^i=\rho
    \quad (1\leq i\leq M),
    \qquad
    \operatorname{Tr}(\beta_v^i\beta_w^j)=0
    \quad (i\neq j,\ v\simeq w).
    \label{eq:zero-error-code}
\end{equation}
Here $\beta_v^i$ is Bob's unnormalized conditional state when Alice encodes message $i$ using input $v$, and $\rho$ is his reduced state before encoding. The first condition expresses that this reduced state is independent of the message. The second requires orthogonal supports for distinct messages whenever their inputs can produce the same output. The largest feasible $M$ is the \emph{entanglement-assisted independence number} $\alpha^*(H)$.

This characterization shows that the maximum number of messages depends only on $H$, both for unrestricted and for maximally entangled assistance~\cite{CubittLeungMatthewsWinter}. Conversely, every finite graph is the confusability graph of a classical channel. For example, let each nonisolated input vertex output a uniformly random incident edge, and assign each isolated vertex its own deterministic output symbol.

For completeness, we describe how the operators $\{\beta_v^i\}$ give rise to an entanglement-assisted zero-error code. Write $\rho=\sum_{r=1}^d p_r|r\rangle\langle r|$ with $p_r>0$. A suitable shared state and encoding POVMs are
\[
    |\chi_\rho\rangle
      =\sum_{r=1}^d\sqrt{p_r}\,|r,r\rangle,
    \qquad
    M_v^i
      =\bigl(\rho^{-1/2}\beta_v^i\rho^{-1/2}\bigr)^T.
\]
Here $T$ denotes transpose in the chosen eigenbasis. The conditional states are $\rho^{1/2}(M_v^i)^T\rho^{1/2}=\beta_v^i$. For each output $o$, Bob's unnormalized conditional states $\tau_i^o=\sum_v\mathcal N(o\mid v)\beta_v^i$ have mutually orthogonal supports by~\eqref{eq:zero-error-code}, so he can distinguish the messages perfectly.

We write $\alpha^*_{\mathrm{ME}}(H)$ for the same maximum when the shared state is required to be maximally entangled, $|\Phi_d\rangle=d^{-1/2}\sum_{r=1}^d|r,r\rangle$, for some finite $d$. Equivalently, impose $\rho=I_d/d$ in \eqref{eq:zero-error-code}. In this case $\beta_v^i=(M_v^i)^T/d$, so the encoding POVMs satisfy
\begin{equation}
    \operatorname{Tr}(M_v^iM_w^j)=0
    \qquad (i\neq j,\ v\simeq w).
    \label{eq:me-zero-error-code}
\end{equation}
The encoding POVMs are arbitrary and need not be projective.

\section{Flagged Magic-Square game}\label{sec:FMS-game}

\subsection{Questions, answers, and verification}\label{sec:game}

\paragraph{Question sets.}
The verifier samples uniformly from $X\times Y$, where
\[
X=\{R_1,R_2,R_3,T_0,T_1\},
\qquad
Y=\{C_1,C_2,C_3,U_0,U_1\}.
\]
The questions $R_i$ and $C_j$ are the \emph{Magic-Square questions}. The two additional questions for each player, $T_0,T_1$ for Alice and $U_0,U_1$ for Bob, are called \emph{bridge questions}. Uniformity is inessential for perfect play and merely fixes a concrete game distribution.

\paragraph{Answer sets.}
Each question has five answers. A row or column question has its four valid Magic-Square answers, as defined in Section~\ref{sec:MS-notation}, together with $\bot$. For $x,y\in\{0,1\}$, the bridge question $T_x$ has the two \emph{sign answers} $\pm1$ and the three valid Magic-Square answers to $C_1$ other than $\cans{x}$. Similarly, $U_y$ uses the two sign answers and the three valid Magic-Square answers to $R_1$ other than $\rans{y}$. The Magic-Square answers retain their $C_1$ or $R_1$ labels.

\paragraph{Verification predicate.}
The verification rules are summarized by
\begin{equation}
\renewcommand{\arraystretch}{1.35}
\begin{array}{c|ccccc}
 & C_1&C_2&C_3&U_0&U_1\\ \hline
R_1&\MS&\MS&\MS&\CalB(0)&\CalB(1)\\
R_2&\MS&\MS&\MS&*&*\\
R_3&\MS&\MS&\MS&*&*\\
T_0&\CalA(0)&*&*&\Tw(0,0)&\Tw(0,1)\\
T_1&\CalA(1)&*&*&\Tw(1,0)&\Tw(1,1).
\end{array}
\label{eq:question-table}
\end{equation}
We first explain the \emph{idea} behind each of the constraints, followed by the formal definitions. In the Magic-Square part, the players synchronize their flag answers and give consistent Magic-Square answers otherwise. This equips each player with a fixed flag subspace with rigid Magic-Square measurements on its complement. The \emph{calibration constraints} confine each bridge measurement's sign outcomes to the flag and one selected Magic-Square outcome. The CHSH-like \emph{frustrated sign constraints} then force the bridge measurements to mix these contributions so that losing amplitudes cancel. This cancellation requires unequal Schmidt coefficients, ruling out perfect play with a maximally entangled state.

We now state the rules formally.

\paragraph{Magic-Square rule $\MS$.}
For $(R_i,C_j)$:
\begin{itemize}
\item $(\bot,\bot)$ wins.
\item exactly one flag answer $\bot$ loses.
\item two nonflag answers $s,t$ win exactly when $s_j=t_i$.
\end{itemize}

\paragraph{Alice calibration constraint $\CalA(x)$.}
For $(T_x,C_1)$, Alice's answer $a$ and Bob's answer $b$ win if and only if one of the following conditions holds:
\begin{enumerate}
\item $a\in\T$ and $b\in\{\bot,\cans{x}\}$.
\item $a$ is a Magic-Square answer and $a=b$.
\end{enumerate}
Every other answer pair loses.

\paragraph{Bob calibration constraint $\CalB(y)$.}
For $(R_1,U_y)$, Alice's answer $a$ and Bob's answer $b$ win if and only if one of the following conditions holds:
\begin{enumerate}
\item $b\in\T$ and $a\in\{\bot,\rans{y}\}$.
\item $b$ is a Magic-Square answer and $b=a$.
\end{enumerate}
Every other answer pair loses.

\paragraph{Frustrated sign rule $\Tw(x,y)$.}
For $(T_x,U_y)$:
\begin{itemize}
\item if both answers are signs $a,b\in\T$, they win exactly when
\begin{equation}
 ab=(-1)^{xy}.
 \label{eq:twisted-condition}
\end{equation}
\item if at least one answer is a Magic-Square answer, they win automatically.
\end{itemize}

\paragraph{Free rule $*$.}
Every answer pair wins.  Equivalently, these question pairs may be removed from the support of the game distribution.

\subsection{A perfect nonmaximally entangled strategy}\label{sec:honest-strategy}

Let the local space be
\[
\mathcal H=\C\ket\bot\oplus(\C^2\otimes\C^2),
\qquad
D_0=1\oplus\sqrt2 I_4.
\]
The shared state is
\begin{equation}
\ket\psi
 =\frac{\operatorname{vec}(D_0)}{3}
 =\frac13\ket{\bot,\bot}
 +\frac{\sqrt2}{3}\sum_{z\in\{0,1\}^2}\ket{z,z}.
\label{eq:honest-state}
\end{equation}

For $R_i$, Alice measures in the orthonormal basis consisting of $\ket\bot$ and a joint eigenbasis of row $i$ in \eqref{eq:magic-square}, with the natural answer labels.  For $C_j$, Bob uses the complex conjugates of the corresponding column eigenvectors, together with $\ket\bot$.  On the four-dimensional summand this is the usual perfect Magic-Square strategy, while the one-dimensional summand gives a common flag outcome.

For $T_x$, Alice uses the two vectors
\begin{equation}
\ket{t_x^a}
 :=\frac{\ket\bot+a\ket{c_x}}{\sqrt2},
\qquad a\in\T,
\label{eq:T-basis}
\end{equation}
together with the other three vectors of the $C_1$ eigenbasis.  The vectors in \eqref{eq:T-basis} receive sign labels, and the remaining vectors retain their $C_1$ answer labels.  For $U_y$, Bob uses the complex conjugates of
\begin{equation}
\ket{u_y^b}
 :=\frac{\ket\bot+b\ket{r_y}}{\sqrt2},
\qquad b\in\T,
\label{eq:U-basis}
\end{equation}
together with the other three vectors of the $R_1$ eigenbasis.

\begin{proposition}\label{prop:perfect}
The state and measurements above win $G_{\mathrm{FMS}}$ with probability one.
\end{proposition}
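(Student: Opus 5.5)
The plan is to use the standard identity for a state of the form $\ket\psi=\operatorname{vec}(D_0)/3$ with $D_0$ diagonal and real. If Alice's outcome is the projector $\proj{\alpha}$ and Bob's is $\proj{\bar\beta}$, then
\[
\bigl|(\bra{\alpha}\otimes\bra{\bar\beta})\ket\psi\bigr|^2=\tfrac19\,\bigl|\bra{\alpha}D_0\ket{\beta}\bigr|^2 .
\]
So an answer pair has probability zero exactly when the weighted inner product $\bra{\alpha}D_0\ket{\beta}$ vanishes. We also need that each measurement is a genuine orthonormal basis. For $T_x$, the vectors $\ket{t_x^{\pm}}$ span $\ket\bot$ and $\ket{c_x}$, and the remaining three vectors of the $C_1$ eigenbasis are orthogonal to both, so this is a basis. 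The same argument works for $U_y$. The proof then reduces to checking, for each of the question pairs in \eqref{eq:question-table} that are not free, that every losing answer pair has vanishing weighted overlap.

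\textbf{Magic-Square pairs.} Since $D_0$ acts as $1$ on $\ket\bot$ and as $\sqrt2 I_4$ on the complement, $\bra{\bot}D_0\ket{\beta}=0$ whenever $\beta\perp\ket\bot$. Hence the mixed flag/nonflag pairs have probability zero. On the four-dimensional block, $D_0$ is a scalar multiple of the identity, so the state there is maximally entangled. The standard perfect Magic-Square strategy therefore gives zero weighted overlap between Alice's row eigenvector with signs $s$ and Bob's column eigenvector with signs $t$ whenever $s_j\ne t_i$. This holds because the shared observable in cell $(i,j)$ is diagonal in both bases.

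\textbf{Calibration pairs.} Take $\CalA(x)$ with questions $(T_x,C_1)$. Bob's basis is $\ket\bot$ together with the $C_1$ eigenbasis, which contains $\ket{c_0},\ket{c_1}$ under the labels $\cans0,\cans1$ of \eqref{eq:distinguished-vectors}. Alice's sign vectors lie in $\operatorname{span}\{\ket\bot,\ket{c_x}\}$, so they are $D_0$-orthogonal to every other $C_1$ eigenvector. Her Magic-Square answers are themselves $C_1$ eigenvectors other than $\ket{c_x}$, so they are $D_0$-orthogonal to $\ket\bot$ and to every eigenvector except the identical one. This gives exactly the winning set of $\CalA(x)$. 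The case $\CalB(y)$ is symmetric, using the $R_1$ eigenbasis.

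\textbf{Frustrated sign rule.} This is the only step that needs computation, and the key is the overlap relation \eqref{eq:distinguished-overlaps}. Pairs involving a Magic-Square answer win automatically. For sign answers,
\[
\bra{t_x^a}D_0\ket{u_y^b}=\tfrac12\bigl(1+\sqrt2\,ab\,\ip{c_x}{r_y}\bigr)=\tfrac12\bigl(1+ab(-1)^{xy}\bigr).
\]
This vanishes exactly when $ab=-(-1)^{xy}$, which is precisely the losing condition of \eqref{eq:twisted-condition}.

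The main obstacle is bookkeeping rather than anything conceptual. One has to confirm that the complex-conjugation convention for Bob makes the weighted overlap $\bra{\alpha}D_0\ket{\beta}$ rather than its conjugate. One also has to check that the labels of the $C_1$ and $R_1$ eigenvectors match \eqref{eq:distinguished-answers}: for example, that $\ket{0+}$ is the common $(+1,+1,+1)$ eigenvector of the first column $Z\otimes I$, $I\otimes X$, $Z\otimes X$. This last check is immediate.
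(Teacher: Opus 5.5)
Your proposal is correct and follows essentially the same route as the paper: you reduce each answer-pair probability to the weighted overlap $\bra{\alpha}D_0\ket{\beta}$, then dispatch the Magic-Square, calibration, and frustrated-sign cells using orthogonality of the flag and nonflag summands, orthogonality of eigenvectors, and the overlap relation \eqref{eq:distinguished-overlaps}. Your explicit checks that the bridge vectors form orthonormal bases and that the conjugation convention gives $\bra{\alpha}D_0\ket{\beta}$ are minor additions that the paper leaves implicit.
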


\begin{proof}
For unit vectors $v,w\in\mathcal H$, the joint amplitude for Alice's outcome associated with $v$ and Bob's outcome associated with $\overline w$ is proportional to
\[
\bra vD_0\ket w.
\]
On the Magic-Square block, mixed flag/nonflag amplitudes vanish because the two summands of $\mathcal H$ are orthogonal.  If two nonflag outcomes disagree at the tested cell, their eigenvectors have opposite eigenvalues for the same Hermitian observable and are therefore orthogonal.  Hence every losing $\MS$ pair has zero probability.

For $(T_x,C_1)$, a sign vector $t_x^a$ has support only on $\ket\bot$ and $\ket{c_x}$.  Every remaining $T_x$ basis vector is a $C_1$ eigenvector orthogonal to $\ket{c_x}$ and has nonzero amplitude only with the matching $C_1$ outcome.  This is precisely the support pattern prescribed by $\CalA(x)$.  The verification of $\CalB(y)$ is identical.

For sign outcomes in a frustrated cell,
\begin{align}
\bra{t_x^a}D_0\ket{u_y^b}
 &=\frac12\left(1+\sqrt2\,ab\,\ip{c_x}{r_y}\right)\notag\\
 &=\frac12\left(1+ab(-1)^{xy}\right),
\label{eq:twisted-amplitude}
\end{align}
by \eqref{eq:distinguished-overlaps}.  This vanishes exactly when $ab=-(-1)^{xy}$, which is the losing condition in \eqref{eq:twisted-condition}.
\end{proof}

\subsection{Impossibility of perfect play with maximally entangled states}\label{sec:no-ME}

We use the following exact rigidity statement.  Section~\ref{app:tracial-MS-rigidity} gives a self-contained proof. Robust variants appear in, for example,~\cite{ColadangeloStark,CoudronNatarajan}.

\begin{proposition}[Magic-Square rigidity]\label{prop:tracial-MS-rigidity}
Let $\{A_s^i\}_s$ and $\{B_t^j\}_t$, $i,j=1,2,3$, be POVMs on a Hilbert space $\mathcal K_0$, with outcomes indexed by the valid Magic-Square answers to $R_i$ and $C_j$, respectively (Section~\ref{sec:MS-notation}). Suppose
\begin{equation}
A_s^iB_t^j=0
\qquad\text{whenever }s_j\ne t_i.
\label{eq:tracial-MS-zero}
\end{equation}
Then all six POVMs are projective, and there exist a Hilbert space $\mathcal K$ and a unitary identification
\[
\mathcal K_0\cong\C^2\otimes\C^2\otimes\mathcal K
\]
under which the six measurements are the joint-eigenbasis measurements of \eqref{eq:magic-square}, tensored with $I_{\mathcal K}$.
\end{proposition}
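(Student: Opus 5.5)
The plan is to work at the level of cell observables, convert the zero-product condition into exact operator identities, and then use the standard algebraic rigidity of the Mermin--Peres square. No state is involved, so everything must be derived from \eqref{eq:tracial-MS-zero} together with $\sum_s A_s^i=\sum_t B_t^j=I$.

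\textbf{Step 1: projectivity and equality of cell projections.} Fix a cell $(i,j)$. Put
\[
P_{ij}=\sum_{s:\,s_j=+1}A_s^i,\qquad Q_{ij}=\sum_{t:\,t_i=+1}B_t^j .
\]
By \eqref{eq:tracial-MS-zero} we have $P_{ij}(I-Q_{ij})=0$ and $(I-P_{ij})Q_{ij}=0$. Hence $P_{ij}=P_{ij}Q_{ij}=Q_{ij}$. Substituting back gives $P_{ij}=P_{ij}^2$, so $P_{ij}$ is a projection, and Alice's and Bob's marginal effects on every cell coincide.

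Define the self-adjoint unitaries $O_{ij}=2P_{ij}-I$. By construction,
\[
O_{ij}=\sum_s s_jA_s^i=\sum_t t_iB_t^j .
\]

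\textbf{Step 2: commutation within rows and columns, and the parity constraints.} Since $0\le A_s^i\le P_{ij}$ whenever $s_j=+1$ (and $A_s^i\le I-P_{ij}$ otherwise), the range of $A_s^i$ lies in the corresponding projection. Thus $A_s^i$ commutes with $P_{ij}$ for all $j$, and so $O_{ij}$ commutes with every $A_s^i$. Because $O_{ik}$ is a linear combination of the $A_s^i$, the operators $O_{i1},O_{i2},O_{i3}$ pairwise commute. The same argument on Bob's side shows that columns commute.

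Each $A_s^i$ is then supported in $E_s=\prod_j P_{ij}^{(s_j)}$, where $P^{(+1)}=P$ and $P^{(-1)}=I-P$. The $E_s$ are mutually orthogonal commuting projections. Together with $\sum_s A_s^i=I$, this forces $A_s^i=E_s$, so every POVM is projective and equals the joint spectral projection of its row (respectively column) observables. The product $O_{i1}O_{i2}O_{i3}$ acts as $s_1s_2s_3=+1$ on each $E_s$, hence equals $I$. Likewise $O_{1j}O_{2j}O_{3j}=\kappa_j I$. This gives an operator solution of the Mermin--Peres square by $\pm1$-valued observables, shared by both players.

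\textbf{Step 3: Mermin--Peres algebraic rigidity.} This is the main step. I would first derive anticommutation, for instance $O_{11}O_{22}=-O_{22}O_{11}$ and $O_{12}O_{21}=-O_{21}O_{12}$. The derivation is the usual one: write $O_{33}$ in two ways, as $O_{31}O_{32}$ from row three and as $-O_{13}O_{23}$ from column three. Then expand using $O_{13}=O_{11}O_{12}$, $O_{23}=O_{21}O_{22}$, $O_{31}=O_{11}O_{21}$ and $O_{32}=O_{12}O_{22}$, and use the within-row and within-column commutations to move factors past one another. The sign $\kappa_3=-1$ is what produces the anticommutation. The commutations between the other pairs, such as $[O_{11},O_{12}]=0$ and $[O_{11},O_{21}]=0$, already hold from Step 2.

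Once we have two commuting pairs $(O_{11},O_{22})$ and $(O_{12},O_{21})$, each pair anticommuting internally and the pairs commuting with each other, the standard Pauli-structure argument applies. Each anticommuting pair of self-adjoint unitaries generates $M_2(\C)$, so $\mathcal K_0\cong\C^2\otimes\C^2\otimes\mathcal K$ with $O_{11}=Z\otimes I$, $O_{22}=X\otimes I$, $O_{12}=I\otimes Z$, $O_{21}=I\otimes X$, each tensored with $I_{\mathcal K}$.

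The remaining five entries are then forced by the row and column product relations. For example, $O_{13}=Z\otimes Z$ and $O_{33}=O_{31}O_{32}=(Z\otimes X)(X\otimes Z)=Y\otimes Y$. Finally, since each POVM element is the joint spectral projection of its row or column, the six measurements are exactly the joint-eigenbasis measurements of \eqref{eq:magic-square} tensored with $I_{\mathcal K}$.

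I expect the only delicate point to be the sign bookkeeping in the anticommutation derivation, together with the check that the parity conventions \eqref{eq:column-parities} produce exactly the square \eqref{eq:magic-square}.
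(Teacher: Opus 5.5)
Your proposal is correct and follows essentially the same route as the paper: binary matching on each cell gives common projections $P_{ij}=Q_{ij}$, support arguments give commutation and projectivity within rows and columns, the sign $\kappa_3=-1$ forces anticommutation, and the standard two-qubit decomposition finishes the proof. One bookkeeping remark: given your expansions, the identity $O_{31}O_{32}=-O_{13}O_{23}$ is equivalent to $O_{12}O_{21}=-O_{21}O_{12}$ alone, so obtaining $O_{11}O_{22}=-O_{22}O_{11}$ requires one further relation, such as the column-three commutation $[O_{13},O_{23}]=0$ (the paper uses this same commutation, with the two anticommuting pairs derived in the opposite order) or the row-three commutation.
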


\begin{theorem}[Flagged Magic-Square separation]\label{thm:FMS-separation}
The Flagged Magic-Square game satisfies
\[
\omega^*(G_{\mathrm{FMS}})=1>\omega_{\mathrm{tr}}(G_{\mathrm{FMS}}).
\]
\end{theorem}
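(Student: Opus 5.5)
The equality $\omega^*(G_{\mathrm{FMS}})=1$ follows from Proposition~\ref{prop:perfect}, so the substance is the strict inequality $\omega_{\mathrm{tr}}(G_{\mathrm{FMS}})<1$. The tracial value is attained (compactness of the tracial correlation class, as recalled in Section~\ref{sec:notation}). It therefore suffices to show that there is no perfect tracial strategy. We argue by contradiction: fix a tracial state $\tau$ on a $C^*$-algebra and POVMs $\{E^x_a\},\{F^y_b\}$ with $\tau(E^x_aF^y_b)=0$ for every losing pair. Passing to the GNS representation and the von Neumann algebra generated, we may take $\tau$ faithful. For positive $P,Q$, the condition $\tau(PQ)=\tau(Q^{1/2}PQ^{1/2})=0$ with $\tau$ faithful gives $PQ=0$. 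So every losing pair yields an exact operator relation $E^x_aF^y_b=0$.

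First step: the Magic-Square block. From $\MS$, the flag effect of $R_i$ annihilates every nonflag effect of $C_j$, and symmetrically. Summing over nonflag outcomes gives $E^{R_i}_\bot(I-F^{C_j}_\bot)=0=(I-E^{R_i}_\bot)F^{C_j}_\bot$. Hence $E^{R_i}_\bot=E^{R_i}_\bot F^{C_j}_\bot=F^{C_j}_\bot$, and taking adjoints shows these products are self-adjoint. It follows that all flag effects equal a single projection $P$. On $\mathcal K_0=(I-P)\mathcal K_{\mathrm{GNS}}$ the nonflag effects form POVMs satisfying \eqref{eq:tracial-MS-zero}. Proposition~\ref{prop:tracial-MS-rigidity} then gives projectivity and $\mathcal K_0\cong\C^2\otimes\C^2\otimes\mathcal K$. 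In particular there are projections $\Pi_{c_x}=\proj{c_x}\otimes I$ and $\Pi_{r_y}=\proj{r_y}\otimes I$, with $\Pi_{c_x}\Pi_{r_y}\Pi_{c_x}=\tfrac12\Pi_{c_x}$ and related partial isometries carrying the overlaps \eqref{eq:distinguished-overlaps}. In the tracial picture Alice's and Bob's operators act on the same space, with Bob's effects appearing in their transposed form.

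Second step: calibration. From $\CalA(x)$, each sign effect $E^{T_x}_a$ annihilates every $C_1$ effect other than $P$ and $\Pi_{c_x}$. Hence $E^{T_x}_a=Q_xE^{T_x}_aQ_x$ with $Q_x=P+\Pi_{c_x}$. The Magic-Square outcomes of $T_x$ are forced to coincide with the corresponding $C_1$ projections: use $a=b$ together with completeness, as in the first step. Consequently $E^{T_x}_{+1}+E^{T_x}_{-1}=Q_x$. Symmetrically, $F^{U_y}_{+1}+F^{U_y}_{-1}=P+\Pi_{r_y}=:S_y$.

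Third step, the expected main obstacle: the frustrated sign constraints must produce a contradiction, which requires exploiting the trace. Write $E^x_\pm$ and $F^y_\pm$ for the sign effects. The losing relations give $E^x_aF^y_b=0$ whenever $ab=-(-1)^{xy}$. Summing over $a$, $F^y_b$ annihilates $Q_x$ minus one sign effect, so $E^x_{a}F^y_b=Q_xF^y_b$ for the appropriate $a$. A natural quantity is the sign observable $O_x=E^x_+-E^x_-$, supported in $Q_x$, and similarly $O'_y=F^y_+-F^y_-$ in $S_y$. The perfect-winning relations give $O_xO'_y\cdot(\text{stuff})=(-1)^{xy}Q_xS_y$-type identities, that is, CHSH-type correlations with value $4$ across the four pairs $(x,y)$.

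The plan is to compute $\tau(Q_xS_y)$ in two ways.
\begin{itemize}
\item Winning forces $\tau(O_xO'_y)=(-1)^{xy}\tau(Q_xS_y)$. Summing with signs bounds this via Cauchy--Schwarz in $L^2(\tau)$ in terms of $\tau(P)$ and $\tau(\Pi)$.
\item Rigidity computes $\tau(Q_xS_y)=\tau(P)+\tfrac12\tau(\Pi_{c_0})$ exactly, using $\tau(\Pi_{c_x}\Pi_{r_y})=\tfrac12\tau(\Pi_{c_x})$.
\end{itemize}
The two computations should clash because, under a trace, the flag and Magic-Square contributions are weighted equally, whereas cancellation needs the $\sqrt2$ weighting. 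Concretely, I would first try to show that the relations force each $E^x_a$ to act as a ``$\tfrac12$-mixing'' between $P$ and $\Pi_{c_x}$. This would make $P$ and $\Pi_{c_x}$ equivalent projections with $\tau(P)=\tau(\Pi_{c_x})$, via a partial isometry extracted from the off-diagonal block of $E^x_a$. The resulting CHSH expression then evaluates to $1+\tfrac{1}{\sqrt2}\neq 1\pm\ldots$, which is incompatible with exact vanishing of the losing terms. This mirrors the vector calculation in the introduction. The delicate part is extracting this structure from non-projective sign effects, and I expect that to be where the work lies.
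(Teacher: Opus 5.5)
Your first two steps are correct and agree with Steps~1--2 of the paper: passing to a faithful trace so that every losing pair gives an exact product $E^x_aF^y_b=0$, identifying a common flag projection $P$, applying Magic-Square rigidity on $I-P$, and deriving $E^x_++E^x_-=Q_x$ and $F^y_++F^y_-=S_y$.

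The gap is Step~3, which is where the theorem is actually decided and which you leave as a plan. The trace computation you outline cannot close it. Write $p=\tau(P)$ and $\pi=\tau(\Pi_{c_x})=\tau(\Pi_{r_y})$. Perfect play gives $\sum_{x,y}(-1)^{xy}\tau(O_xO'_y)=4p+2\pi$. Cauchy--Schwarz in $L^2(\tau)$ with $\|O_x\|_2^2,\|O'_y\|_2^2\le p+\pi$ yields only the Tsirelson-type bound $2\sqrt2\,(p+\pi)$. But $4p+2\pi\le 2\sqrt2\,(p+\pi)$ whenever $p\le\pi/\sqrt2$, in particular when the flag sector is empty. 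So everything hinges on the hoped-for equivalence $\tau(P)=\tau(\Pi_{c_x})$ via a ``$\tfrac12$-mixing'' partial isometry, which you do not derive. No such structure is available in the tracial setting. Let $U_x$ and $V_y$ be the blocks of $O_x$ and $O'_y$ mapping the flag space into $\ket{c_x}\otimes\mathcal K$ and $\ket{r_y}\otimes\mathcal K$. The perfect-play relations force both $U_x=V_y/\sqrt2$ and $U_x=\sqrt2\,V_y$, so these blocks vanish. You also never treat the case $P=0$ separately, where the contradiction has to be found inside the multiplicity space.

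The missing idea is to work with exact operator identities; the trace is needed only through faithfulness. Adding and subtracting the two losing relations of cell $(x,y)$ gives $O_xO'_y=(-1)^{xy}Q_xS_y$ and $O_xS_y=(-1)^{xy}Q_xO'_y$. The off-diagonal corners of the second identity give the relations on $U_x,V_y$ above. Compressing the second identity by $P$ gives $PO_xP=(-1)^{xy}PO'_yP$, and the frustrated sign pattern then forces $PO_xP=PO'_yP=0$. Hence $PO_x=PO_x\Pi_{c_x}$ and $O'_yP=\Pi_{r_y}O'_yP$. The $P$-compression of the first identity therefore reads $PO_x\Pi_{c_x}\Pi_{r_y}O'_yP=(-1)^{xy}P$. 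Its left side has norm at most $\|\Pi_{c_x}\Pi_{r_y}\|=1/\sqrt2$, so $P=0$.

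With $P=0$, the sign effects factor as $\proj{c_x}\otimes\widehat E^x_a$ and $\proj{r_y}\otimes\widehat F^y_b$. Every overlap $\ip{c_x}{r_y}$ is nonzero, so the zero products descend to $\mathcal K$. Binary matching around the frustrated square then gives $\widehat F^1_+=\widehat F^1_-$ for two complementary projections, so $\mathcal K=0$ and the whole space vanishes. This is the paper's Steps~3--4.
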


\begin{proof}
Proposition~\ref{prop:perfect} gives $\omega^*(G_{\mathrm{FMS}})=1$.  Since the tracial value is attained, it suffices to rule out a perfect tracial strategy. For a contradiction, assume that $G_{\mathrm{FMS}}$ has a perfect tracial strategy.  Thus there are POVMs $\{A_a^q\}_{a,q}$ and $\{B_b^r\}_{b,r}$ in a unital $C^*$-algebra $\mathcal A$ and a faithful tracial state $\tau$ such that
\[
p(a,b\mid q,r)=\tau(A_a^qB_b^r).
\]
Represent $\mathcal A$ faithfully on a Hilbert space $\mathcal H$. The faithfulness of the state implies that every losing answer pair yields a zero product, $A_a^qB_b^r=0$, between the corresponding effects.

\paragraph{Step 1: isolate the two sectors.}
Let $G_i$ be Alice's flag effect for $R_i$, and let $H_j$ be Bob's flag effect for $C_j$.  Mixed flag/nonflag pairs lose, so
\[
G_i(I-H_j)=0,
\qquad
(I-G_i)H_j=0.
\]
It follows that
\[
G_i=G_iH_j=H_j
\]
for all $i,j$.  Denote their common value by $P^\perp$, and set
\[
P:=I-P^\perp.
\]
Thus $P$ is the projection onto the Magic-Square sector and $P^\perp$ is the flag projection.

The four nonflag effects for each row or column are supported on $P$ and sum to $P$.  Within the corner algebra $P\mathcal A P$, they form a perfect ordinary Magic-Square strategy.  Proposition~\ref{prop:tracial-MS-rigidity} gives a multiplicity space $\mathcal K$ and an identification
\begin{equation}
P\mathcal H\cong\C^2\otimes\C^2\otimes\mathcal K
\label{eq:MS-decomp}
\end{equation}
under which the distinguished outcome projections are
\begin{equation}
\Pi_{\cans{x}}=\proj{c_x}\otimes I_{\mathcal K},
\qquad
\Pi_{\rans{y}}=\proj{r_y}\otimes I_{\mathcal K}.
\label{eq:distinguished-projections}
\end{equation}
Consequently,
\begin{equation}
\Pi_{\cans{x}}\Pi_{\rans{y}}
 =\ip{c_x}{r_y}\,\ket{c_x}\!\bra{r_y}\otimes I_{\mathcal K},
\qquad
\bigl\|\Pi_{\cans{x}}\Pi_{\rans{y}}\bigr\|\le\frac1{\sqrt2}.
\label{eq:projection-overlap}
\end{equation}
If $P\ne0$, the inequality in \eqref{eq:projection-overlap} is an equality, and the product is nonzero.  If $P=0$, we interpret $\mathcal K=\{0\}$ and all these projections as zero.

\paragraph{Step 2: calibration constraints determine the bridge supports.}
Let $E_x^+,E_x^-$ be Alice's sign effects for $T_x$, and let $F_y^+,F_y^-$ be Bob's sign effects for $U_y$.  The calibration constraints and Lemma~\ref{lem:component-matching} imply
\begin{equation}
E_x^++E_x^-=P^\perp+\Pi_{\cans{x}}=:S_x^A,
\qquad
F_y^++F_y^-=P^\perp+\Pi_{\rans{y}}=:S_y^B.
\label{eq:bridge-supports}
\end{equation}
Define
\begin{equation}
A_x:=E_x^+-E_x^-,
\qquad
B_y:=F_y^+-F_y^-,
\label{eq:sign-decomp}
\end{equation}
which are self-adjoint contractions supported on $S_x^A$ and $S_y^B$, respectively.

\paragraph{Step 3: the angled supports eliminate the flag sector.}
Fix $x,y\in\{0,1\}$.  The two losing sign pairs satisfy $b=-(-1)^{xy}a$. Expanding $A_xB_y$ and $A_xS_y^B$ using \eqref{eq:sign-decomp} and discarding the two products corresponding to the losing sign pairs yields
\begin{align}
A_xB_y&=(-1)^{xy}S_x^AS_y^B,
\label{eq:core1}\\
A_xS_y^B&=(-1)^{xy}S_x^AB_y.
\label{eq:core2}
\end{align}

Compressing \eqref{eq:core2} by $P^\perp$ gives
\[
P^\perp A_xP^\perp=(-1)^{xy}P^\perp B_yP^\perp.
\]
The four sign cells in the frustrated square give
\[
P^\perp A_0P^\perp
 =P^\perp B_0P^\perp
 =P^\perp B_1P^\perp,
\qquad
P^\perp A_1P^\perp
 =P^\perp B_0P^\perp
 =-P^\perp B_1P^\perp.
\]
Hence
\begin{equation}
P^\perp A_xP^\perp=P^\perp B_yP^\perp=0
\qquad(x,y\in\{0,1\}).
\label{eq:flag-diagonal-zero}
\end{equation}

Compressing \eqref{eq:core1} by $P^\perp$ gives
\begin{equation}
P^\perp A_xB_yP^\perp=(-1)^{xy}P^\perp.
\label{eq:flag-identity}
\end{equation}
On the other hand, the support relations and \eqref{eq:flag-diagonal-zero} give
\[
P^\perp A_x=P^\perp A_x\Pi_{\cans{x}},
\qquad
B_yP^\perp=\Pi_{\rans{y}}B_yP^\perp.
\]
Therefore
\[
P^\perp A_xB_yP^\perp
 =P^\perp A_x\Pi_{\cans{x}}\Pi_{\rans{y}}B_yP^\perp,
\]
and \eqref{eq:projection-overlap} implies
\[
\bigl\|P^\perp A_xB_yP^\perp\bigr\|
 \le \|A_x\|\,
      \bigl\|\Pi_{\cans{x}}\Pi_{\rans{y}}\bigr\|\,
      \|B_y\|
 \le\frac1{\sqrt2}.
\]
If $P^\perp\ne0$, the right-hand side of \eqref{eq:flag-identity} has norm one, a contradiction.  Thus
\begin{equation}
P^\perp=0.
\label{eq:flag-zero}
\end{equation}

\paragraph{Step 4: the frustrated parity eliminates the Magic-Square sector.}
With $P^\perp=0$, equation \eqref{eq:bridge-supports} becomes
\[
E_x^++E_x^-=\Pi_{\cans{x}},
\qquad
F_y^++F_y^-=\Pi_{\rans{y}}.
\]
Every positive operator supported on one of the rank-one tensor projections in \eqref{eq:distinguished-projections} factors through the multiplicity space.  Thus there are POVM effects $\widehat E_x^\pm,\widehat F_y^\pm$ on $\mathcal K$ such that
\begin{align*}
E_x^a&=\proj{c_x}\otimes\widehat E_x^a,
&\widehat E_x^++\widehat E_x^-&=I_{\mathcal K},\\
F_y^b&=\proj{r_y}\otimes\widehat F_y^b,
&\widehat F_y^++\widehat F_y^-&=I_{\mathcal K}.
\end{align*}
For a losing sign pair,
\[
0=E_x^aF_y^b
 =\ip{c_x}{r_y}\,\ket{c_x}\!\bra{r_y}
   \otimes\widehat E_x^a\widehat F_y^b.
\]
Every overlap in \eqref{eq:distinguished-overlaps} is nonzero, so $\widehat E_x^a\widehat F_y^b=0$.  Lemma~\ref{lem:binary-matching} therefore gives
\begin{equation}
\widehat E_x^a=\widehat F_y^{(-1)^{xy}a}.
\label{eq:multiplicity-matching}
\end{equation}
Applying \eqref{eq:multiplicity-matching} to the four cells of the sign square gives
\[
\widehat E_0^+=\widehat F_0^+=\widehat F_1^+,
\qquad
\widehat E_1^+=\widehat F_0^+=\widehat F_1^-.
\]
Hence $\widehat F_1^+=\widehat F_1^-$.  Lemma~\ref{lem:binary-matching} also shows that these are complementary projections, so their equality forces $I_{\mathcal K}=0$.  It follows from \eqref{eq:MS-decomp} that $P=0$.  Together with \eqref{eq:flag-zero}, this gives $I=P+P^\perp=0$, a contradiction.
\end{proof}

\begin{corollary}\label{cor:FMS-ME-gap}
No finite-dimensional maximally entangled strategy wins $G_{\mathrm{FMS}}$ perfectly. Moreover,
\[
\omega_{\mathrm{ME}}(G_{\mathrm{FMS}})\le \omega_{\mathrm{tr}}(G_{\mathrm{FMS}})<1.
\]
\end{corollary}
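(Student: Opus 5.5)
The corollary follows from Theorem~\ref{thm:FMS-separation} together with the inequality $\omega_{\mathrm{ME}}(G)\le\omega_{\mathrm{tr}}(G)$ recorded in Section~\ref{sec:notation}. The plan is first to make that embedding explicit for $G_{\mathrm{FMS}}$ and then to read off both claims.

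\textbf{Embedding maximally entangled strategies.} Take a finite-dimensional strategy with shared state $\ket{\Phi_d}=d^{-1/2}\sum_r\ket{r,r}$, Alice's POVMs $\{A_a^q\}$ and Bob's POVMs $\{B_b^r\}$ on $\C^d$. By the transpose trick,
\[
\bra{\Phi_d}A_a^q\otimes B_b^r\ket{\Phi_d}=\tfrac1d\Tr\bigl(A_a^q(B_b^r)^T\bigr).
\]
The transposes $\{(B_b^r)^T\}_b$ are again POVMs, since transposition preserves positivity and the identity. Hence the correlation has the tracial form $\tau(E_a^qF_b^r)$, with $\mathcal A=M_d(\C)$ and the normalized trace $\tau=\Tr/d$. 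This trace is faithful, so the strategy falls within the class that the proof of Theorem~\ref{thm:FMS-separation} considers. It follows that the winning probability of any maximally entangled strategy is at most $\omega_{\mathrm{tr}}(G_{\mathrm{FMS}})$. Taking the supremum over $d$ and over strategies gives $\omega_{\mathrm{ME}}\le\omega_{\mathrm{tr}}$.

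\textbf{Strict inequality.} Theorem~\ref{thm:FMS-separation} gives $\omega_{\mathrm{tr}}(G_{\mathrm{FMS}})<1$. Combining this with the previous step yields $\omega_{\mathrm{ME}}(G_{\mathrm{FMS}})\le\omega_{\mathrm{tr}}(G_{\mathrm{FMS}})<1$. Any perfect finite-dimensional maximally entangled strategy would have value $1\le\omega_{\mathrm{ME}}$, so no such strategy exists.

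\textbf{A direct check of the first claim.} The first claim can also be seen without invoking attainment of the tracial value. A perfect maximally entangled strategy, after the transpose step above, is itself a perfect tracial strategy on $M_d(\C)$ with faithful trace. Running Steps~1--4 of the proof of Theorem~\ref{thm:FMS-separation} on it yields $I=0$, which is a contradiction.

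\textbf{Where care is needed.} The only step requiring attention is the strict inequality for the supremum $\omega_{\mathrm{ME}}$. Since $\omega_{\mathrm{ME}}$ is a supremum over unbounded dimension, it could a priori equal $1$ without being attained. This is exactly why we pass through $\omega_{\mathrm{tr}}$, whose attainment (compactness of the tracial correlation class, cited from \cite{DykemaPaulsen,TodorovTurowska}) was already used in Theorem~\ref{thm:FMS-separation} to upgrade ``no perfect tracial strategy'' to $\omega_{\mathrm{tr}}<1$. No further work is needed beyond citing it.
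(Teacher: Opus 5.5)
Your proposal is correct and follows the paper's route: the corollary is read off from Theorem~\ref{thm:FMS-separation}, which already uses attainment of $\omega_{\mathrm{tr}}$ to obtain $\omega_{\mathrm{tr}}<1$, combined with the transpose-trick embedding $\omega_{\mathrm{ME}}\le\omega_{\mathrm{tr}}$ from Section~\ref{sec:notation}. The only step you leave implicit is compressing the POVMs to the Schmidt support of $\ket{\Phi_d}$ when the local spaces are larger than $\C^d$; this is harmless.
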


\section{One-shot zero-error communication}\label{sec:zero-error}

We establish a reduction from finite bipartite nonlocal games to one-shot zero-error communication. Given a game $G$, we construct a classical channel, specified through its confusability graph, that admits a $q$-message entanglement-assisted zero-error code if and only if $G$ has a perfect finite-dimensional strategy, where $q$ is the total number of local questions. The correspondence between strategies and codes preserves Schmidt rank and maximal entanglement, allowing us to transfer resource separations from games to communication.

\subsection{The game-to-channel construction}
\label{sec:game-channel-construction}

Let $G$ be a bipartite game with nonempty finite question sets $X,Y$, fixed nonempty finite answer sets $\mathcal A,\mathcal B$, a question distribution $\pi$ with full support, and a verification predicate $V(a,b\mid x,y)$. Regard $X$ and $Y$ as disjoint sets, and let
\[
    q:=|X|+|Y|.
\]

\paragraph{The channel construction.}
We define the classical noisy channel $\mathcal{N}_G$ by specifying its confusability graph.
Our graph is closely related to the game graphs of Man\v{c}inska, Roberson, and Varvitsiotis~\cite{MancinskaRobersonVarvitsiotis2016}. Define $\Gamma_G$ on the local question--answer pairs
\[
    V(\Gamma_G)
      =(X\times\mathcal A)\sqcup(Y\times\mathcal B).
\]
For $x\in X$ and $y\in Y$, make each answer block $C_x:=\{x\}\times\mathcal A$ and $C_y:=\{y\}\times\mathcal B$ a clique. Join $(x,a)$ to $(y,b)$ exactly when $V(a,b\mid x,y)=0$. There are no other edges and the answer blocks partition $V(\Gamma_G)$ into $q$ cliques.
This $q$-clique partition implies that $\alpha^*(\Gamma_G)\leq q$~\cite{Beigi2010}. We will need the following description of codes attaining this bound. We also include a direct proof of the upper bound itself for completeness.

\begin{lemma}[Saturation of the clique-partition bound]
\label{lem:clique-partition-saturation}
Suppose $V(H)$ is partitioned into $q$ cliques $C_z$. Every code satisfying~\eqref{eq:zero-error-code} has $M\leq q$. If $M=q$, let
\[
    \sigma_{i,z}:=\sum_{v\in C_z}\beta_v^i,
    \qquad
    P_{i,z}:=\supp(\sigma_{i,z}),
\]
where $\supp$ denotes the support projection. Then
\begin{equation}
    \sigma_{i,z}=\rho P_{i,z}=P_{i,z}\rho,
    \qquad
    \sum_iP_{i,z}=I,
    \qquad
    \sum_zP_{i,z}=I.
\label{eq:clique-partition-saturation}
\end{equation}
In particular, the projections in each row and each column of $(P_{i,z})$ are pairwise orthogonal.
\end{lemma}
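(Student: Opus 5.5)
The plan is to use two elementary facts. Within one clique, distinct messages have orthogonal supports. Each clique sum $\sigma_{i,z}$ is dominated by $\rho$. Combining these by a trace count gives the bound $M\le q$, and the equality case of the same count forces the stated structure.

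\emph{Step 1: row orthogonality.} Fix a clique $C_z$. Any $v,w\in C_z$ satisfy $v\simeq w$, since they are equal or adjacent. So \eqref{eq:zero-error-code} gives $\Tr(\beta_v^i\beta_w^j)=0$ for $i\ne j$, and summing over $v,w\in C_z$ yields $\Tr(\sigma_{i,z}\sigma_{j,z})=0$. For positive operators this forces $\sigma_{i,z}\sigma_{j,z}=0$, so the support projections $P_{i,z}$, $i=1,\dots,M$, are pairwise orthogonal. Hence $\sum_i P_{i,z}\le I$ for each $z$.

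\emph{Step 2: domination and the trace count.} Since $\sum_z\sigma_{i,z}=\rho$ with every summand positive, we have $0\le\sigma_{i,z}\le\rho$. Because $\sigma_{i,z}=P_{i,z}\sigma_{i,z}P_{i,z}$, it follows that
\[
\Tr\sigma_{i,z}=\Tr(P_{i,z}\sigma_{i,z}P_{i,z})\le\Tr(P_{i,z}\rho P_{i,z})=\Tr(\rho P_{i,z}).
\]
Summing over $i$ and $z$, and using $\sum_z\sigma_{i,z}=\rho$, $\Tr\rho=1$ and Step 1, we get
\[
M=\sum_{i,z}\Tr\sigma_{i,z}\le\sum_z\Tr\Bigl(\rho\sum_iP_{i,z}\Bigr)\le\sum_z\Tr\rho=q .
\]

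\emph{Step 3: equality case $M=q$.} Then every inequality in the chain of Step 2 is tight.
\begin{itemize}
\item First, $\Tr\bigl(\rho(I-\sum_iP_{i,z})\bigr)=0$ for each $z$. Since $I-\sum_iP_{i,z}\ge0$ and $\rho$ is positive definite, this gives $\sum_iP_{i,z}=I$.
\item Second, $\Tr\bigl(P_{i,z}(\rho-\sigma_{i,z})P_{i,z}\bigr)=0$ with $\rho-\sigma_{i,z}\ge0$. Hence $(\rho-\sigma_{i,z})^{1/2}P_{i,z}=0$, and so $(\rho-\sigma_{i,z})P_{i,z}=0$. This means $\rho P_{i,z}=\sigma_{i,z}P_{i,z}=\sigma_{i,z}$.
\item Taking adjoints gives $P_{i,z}\rho=\sigma_{i,z}$ as well.
\item Finally, $\rho\sum_zP_{i,z}=\sum_z\sigma_{i,z}=\rho$, and invertibility of $\rho$ gives $\sum_zP_{i,z}=I$.
\end{itemize}
Projections summing to $I$ are pairwise orthogonal, since compressing the sum by any one of them shows the other terms vanish on its range. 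So each row and each column of $(P_{i,z})$ consists of orthogonal projections.

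The only delicate point is Step 3's passage from the trace equality to the operator identity $\sigma_{i,z}=\rho P_{i,z}$. I would handle it exactly as above, via the positive square root of $\rho-\sigma_{i,z}$, rather than trying to compare $\sigma_{i,z}$ with $\rho$ through $\rho^{-1/2}\sigma_{i,z}\rho^{-1/2}$. The latter route would require an extra commutation argument.
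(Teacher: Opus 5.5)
Your proposal is correct and follows essentially the same route as the paper. Within-clique orthogonality plus $0\le\sigma_{i,z}\le\rho$ give the chain $\sum_i\Tr\sigma_{i,z}\le\sum_i\Tr(\rho P_{i,z})\le\Tr\rho=1$; tightness at $M=q$ then yields $\sum_iP_{i,z}=I$, $(\rho-\sigma_{i,z})P_{i,z}=0$ (hence $\sigma_{i,z}=\rho P_{i,z}=P_{i,z}\rho$), and $\sum_zP_{i,z}=I$ by invertibility of $\rho$, and you simply make explicit a few steps the paper leaves implicit (that $\Tr(\sigma_{i,z}\sigma_{j,z})=0$ forces orthogonal supports, and the square-root argument for the operator identity).
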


\begin{proof}
For fixed $z$, the operators $\sigma_{i,z}$ have pairwise orthogonal supports, by~\eqref{eq:zero-error-code} and the clique condition. Moreover, $0\leq\sigma_{i,z}\leq\rho$. Hence
\[
    \sum_i\Tr(\sigma_{i,z})
    \leq\sum_i\Tr(P_{i,z}\rho)
    \leq\Tr\rho=1.
\]
Since $M=\sum_{i,z}\Tr(\sigma_{i,z})$, summing the above over $z$ gives $M\leq q$.

If $M=q$, every inequality above is an equality for every $z$. Since $\rho>0$ and $I-\sum_iP_{i,z}\geq0$, we obtain $\sum_iP_{i,z}=I$. Equality also holds separately for every $i,z$ in $\Tr(\sigma_{i,z})\leq\Tr(P_{i,z}\rho)$, so
\[
    \Tr\bigl(P_{i,z}(\rho-\sigma_{i,z})\bigr)=0
\]
Positivity of $\rho-\sigma_{i,z}$ implies $(\rho-\sigma_{i,z})P_{i,z}=0$. 
Since $P_{i,z}$ is the support projection of $
\sigma_{i,z}$, we get
\[
    \rho P_{i,z}=\sigma_{i,z}=P_{i,z}\rho.
\]
Finally, $\sum_z\sigma_{i,z}=\rho$ together with $\rho$ having full rank implies $\sum_zP_{i,z}=I$. 
\end{proof}

\subsection{Perfect strategies and optimal zero-error codes}
\label{sec:game-channel-correspondence}

The following theorem is the key to transferring entanglement requirements from games to communication. It gives an explicit correspondence between shared states permitting perfect play in $G$ and those permitting $q$-message zero-error transmission through $\mathcal N_G$.

\begin{theorem}[Perfect-strategy correspondence]
\label{thm:game-channel-transfer}
Let $D=\operatorname{diag}(\lambda_1,\ldots,\lambda_d)>0$ satisfy $\Tr(D^2)=1$, and define
\begin{equation}
    \ket{\psi_D}=\sum_{r=1}^d\lambda_r\ket{r,r},
    \qquad
    \ket{\chi_D}=
    \frac{1}{\sqrt{\Tr D}}
    \sum_{r=1}^d\sqrt{\lambda_r}\ket{r,r}.
\label{eq:square-root-code-state}
\end{equation}
Then $G$ has a perfect strategy using $\ket{\psi_D}$ if and only if $\mathcal N_G$ admits a $q$-message one-shot zero-error code using $\ket{\chi_D}$. Both directions preserve Schmidt rank and maximal entanglement.
\end{theorem}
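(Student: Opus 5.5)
The plan is to identify the message-independence and orthogonality conditions in \eqref{eq:zero-error-code} with the normalization and zero-probability conditions of a perfect strategy, after conjugating by $D^{1/2}$. The basic identity is that for Alice and Bob effects $A,B$,
\[
\bra{\psi_D}A\otimes B\ket{\psi_D}=\Tr\bigl(DA^TDB\bigr)=\Tr\Bigl(\bigl(D^{1/2}A^TD^{1/2}\bigr)\bigl(D^{1/2}BD^{1/2}\bigr)\Bigr).
\]
Here the transpose is taken in the Schmidt basis. Since both factors on the right are positive, a losing pair has probability zero exactly when these two conjugated operators have trace-orthogonal product. On the code side, the reduced state of $\ket{\chi_D}$ is $\rho=D/\Tr D$, which is positive definite. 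Both directions preserve the Schmidt rank $d$. Moreover, $D\propto I$ if and only if $D^{1/2}\propto I$, so maximal entanglement is also preserved. It therefore suffices to translate operators back and forth.

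\emph{Game $\Rightarrow$ code.} Take a perfect strategy $\{A^x_a\},\{B^y_b\}$ on $\ket{\psi_D}$, and use the messages $i\in X\sqcup Y$. For a message $x\in X$, put
\[
\beta^x_{(x,a)}=\frac{D^{1/2}(A^x_a)^TD^{1/2}}{\Tr D},
\]
and set $\beta^x_v=0$ for every vertex $v\notin C_x$. For a message $y\in Y$, put $\beta^y_{(y,b)}=D^{1/2}B^y_bD^{1/2}/\Tr D$, again supported only on $C_y$. Each message then sums to $D/\Tr D=\rho$. For orthogonality, consider two distinct messages of the same player. They are supported on different blocks, and $\Gamma_G$ has no edges between such blocks, so the pair $(v,w)$ never satisfies $v\simeq w$. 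For a message $x$ and a message $y$, adjacency means that the pair loses. By the identity above, the trace of the product is then proportional to $p(a,b\mid x,y)=0$. The recipe following \eqref{eq:zero-error-code} then yields a code using $\ket{\chi_\rho}=\ket{\chi_D}$.

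\emph{Code $\Rightarrow$ game.} This is the main step. A general $q$-message code need not assign one clique per message, so I would use Lemma~\ref{lem:clique-partition-saturation} to recover the structure. For every vertex $v$, set $\gamma_v:=\sum_i\beta^i_v$, aggregating over all messages. For a question $x$ the lemma gives
\[
\sum_{a}\gamma_{(x,a)}=\sum_i\sigma_{i,x}=\sum_i\rho P_{i,x}=\rho .
\]
Accordingly, I would define
\[
A^x_a:=\bigl((\Tr D)\,D^{-1/2}\gamma_{(x,a)}D^{-1/2}\bigr)^T,\qquad B^y_b:=(\Tr D)\,D^{-1/2}\gamma_{(y,b)}D^{-1/2}.
\]
Both families are POVMs, and they give a strategy on $\ket{\psi_D}$. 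Now take a losing pair $v=(x,a)$, $w=(y,b)$, which is adjacent in $\Gamma_G$. Its probability is proportional to
\[
\Tr(\gamma_v\gamma_w)=\sum_{i,j}\Tr\bigl(\beta^i_v\beta^j_w\bigr).
\]
The terms with $i\ne j$ vanish by \eqref{eq:zero-error-code}. The terms with $i=j$ are the delicate point, because the code condition says nothing about them. Here I use that $0\le\beta^i_v\le\sigma_{i,x}$, so $\beta^i_v$ is supported in $P_{i,x}$, and likewise $\beta^i_w$ is supported in $P_{i,y}$. The lemma states that projections in the same row of $(P_{i,z})$ are pairwise orthogonal, so $P_{i,x}P_{i,y}=0$ and these diagonal traces also vanish. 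Hence every losing pair has probability zero, and the strategy is perfect. I expect this diagonal-term argument, which relies on the saturation lemma, to be the only real obstacle. The rest is routine bookkeeping of normalizations.
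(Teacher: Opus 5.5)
Your proposal is correct and follows essentially the same route as the paper. In the game-to-code direction you conjugate the (partially transposed) effects by $\rho^{1/2}$ with one message per question. In the converse you aggregate $\gamma_v=\sum_i\beta_v^i$ and use Lemma~\ref{lem:clique-partition-saturation} to kill the $i=j$ terms via row-orthogonality of $(P_{i,z})$. The only differences are cosmetic: you transpose Alice's effects rather than Bob's, and you check vanishing of traces rather than of operator products.
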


\begin{proof}[Proof of Theorem~\ref{thm:game-channel-transfer}]
\emph{From a perfect strategy to a zero-error code.} Let $\{E_a^x\}$ and $\{F_b^y\}$ be the POVMs of a perfect strategy using $\ket{\psi_D}$. In the Schmidt basis, every losing tuple satisfies
\begin{equation}
    0=\bra{\psi_D}E_a^x\otimes F_b^y\ket{\psi_D}
      =\Tr\bigl(E_a^xD(F_b^y)^TD\bigr).
\label{eq:losing-Schmidt-zero}
\end{equation}
Put $\rho=D/\Tr D$ and index the $q$ messages by $X\sqcup Y$. Define
\[
    \beta_{(x,a)}^x=\rho^{1/2}E_a^x\rho^{1/2},
    \qquad
    \beta_{(y,b)}^y=\rho^{1/2}(F_b^y)^T\rho^{1/2},
\]
with each message family zero outside its own answer block. Each family sums to $\rho$, and every losing tuple satisfies
\[
    \Tr\bigl(\beta_{(x,a)}^x\beta_{(y,b)}^y\bigr)
    =\frac{\Tr\bigl(E_a^xD(F_b^y)^TD\bigr)}{(\Tr D)^2}=0.
\]
All other conditions in~\eqref{eq:zero-error-code} follow from the block supports. The realization in Section~\ref{sec:zero-error-preliminaries} gives this code using $\ket{\chi_D}$.

\emph{From a saturating zero-error code to a perfect strategy.} Consider any $q$-message code with receiver state $\rho$ and operators $\beta_v^i$. Apply Lemma~\ref{lem:clique-partition-saturation} and set
\[
    \gamma_{(z,c)}:=\sum_{i=1}^q\beta_{(z,c)}^i.
\]
For every question $z$, equation~\eqref{eq:clique-partition-saturation} gives
\begin{equation}
    \sum_{c:(z,c)\in C_z}\gamma_{(z,c)}
    =\sum_i\sigma_{i,z}=\rho.
\label{eq:aggregated-code-completeness}
\end{equation}
For a losing tuple $(x,a,y,b)$, every product $\beta_{(x,a)}^i\beta_{(y,b)}^j$ vanishes. If $i\neq j$, this follows from~\eqref{eq:zero-error-code} and positivity. If $i=j$, the factors are supported on $P_{i,x}$ and $P_{i,y}$, which are orthogonal because $x$ and $y$ are distinct tagged questions. Thus
\begin{equation}
    \gamma_{(x,a)}\gamma_{(y,b)}=0
    \qquad\bigl(V(a,b\mid x,y)=0\bigr).
\label{eq:aggregated-code-orthogonality}
\end{equation}
Choose an eigenbasis of $\rho$ and define
\[
    \widehat E_a^x=\rho^{-1/2}\gamma_{(x,a)}\rho^{-1/2},
    \qquad
    (\widehat F_b^y)^T=\rho^{-1/2}\gamma_{(y,b)}\rho^{-1/2},
    \qquad
    D':=\frac{\rho}{\sqrt{\Tr(\rho^2)}}.
\]
These measurement effects are positive and sum to the identity by~\eqref{eq:aggregated-code-completeness}. Moreover, for each losing tuple,
\[
    \Tr\bigl(\widehat E_a^xD'(\widehat F_b^y)^TD'\bigr)
    =\frac{\Tr\bigl(\gamma_{(x,a)}\gamma_{(y,b)}\bigr)}
           {\Tr(\rho^2)}=0.
\]
They therefore give a perfect strategy with Schmidt operator $D'$. If the code uses $\ket{\chi_D}$, then $\rho=D/\Tr D$ and $D'=D$, proving the desired converse.

\end{proof}

The next corollary follows directly from Theorem~\ref{thm:game-channel-transfer}.

\begin{corollary}[Optimal transmission and perfect play]
\label{cor:game-channel-capacities}
\[
\begin{aligned}
    \alpha^*(\Gamma_G)=q
    &\iff G\text{ has a perfect finite-dimensional strategy},\\
    \alpha^*_{\mathrm{ME}}(\Gamma_G)=q
    &\iff \begin{aligned}[t]
        &G\text{ has a perfect finite-dimensional}\\
        &\text{maximally entangled strategy}.
    \end{aligned}
\end{aligned}
\]
\end{corollary}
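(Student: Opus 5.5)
The plan is to read off both equivalences from Lemma~\ref{lem:clique-partition-saturation} and Theorem~\ref{thm:game-channel-transfer}, keeping track of which Schmidt spectra the correspondence allows. First, Lemma~\ref{lem:clique-partition-saturation} applies to $\Gamma_G$ with its partition into the $q$ answer-block cliques $C_x,C_y$. Hence every code has $M\le q$, for unrestricted and maximally entangled assistance alike. So $\alpha^*(\Gamma_G)=q$ holds exactly when a $q$-message code exists, and similarly for $\alpha^*_{\mathrm{ME}}$. Recall that the number of messages depends only on the confusability graph, so it is legitimate to phrase everything via $\Gamma_G$, which is the confusability graph of $\mathcal N_G$.

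For the first equivalence, suppose $G$ has a perfect finite-dimensional strategy. As in Section~\ref{sec:zero-error-preliminaries}, we may assume the shared state is pure, by purifying it on one side. We then restrict to its Schmidt support. Every losing-pair condition $\bra\psi E\otimes F\ket\psi=0$ is preserved under this compression: compressing the POVMs to the support still gives POVMs, and the probabilities do not change. So the state has the form $\ket{\psi_D}$ with $D>0$ and $\Tr D^2=1$. The forward direction of Theorem~\ref{thm:game-channel-transfer} then gives a $q$-message code, so $\alpha^*(\Gamma_G)=q$. Conversely, any $q$-message code has some positive definite receiver state $\rho$. Writing $\rho$ in its eigenbasis, the code uses $\ket{\chi_D}$ with $D=\rho/\sqrt{\Tr\rho^2}$. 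The reverse direction of the theorem produces a perfect strategy using $\ket{\psi_D}$, which is finite-dimensional.

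For the second equivalence, we specialize to $D=I_d/\sqrt d$. In that case both $\ket{\psi_D}$ and $\ket{\chi_D}$ are the maximally entangled state $\ket{\Phi_d}$. A perfect maximally entangled strategy is already in Schmidt form with all coefficients positive, so no compression is needed, and the theorem yields a $q$-message code using $\ket{\Phi_d}$. Conversely, a maximally entangled code has $\rho=I_d/d$, which gives $D=I_d/\sqrt d$, and the theorem returns a perfect strategy with $\ket{\Phi_d}$.

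The only delicate step is the reduction of an arbitrary perfect strategy to the normal form the theorem requires: a pure state with positive Schmidt coefficients on the full local spaces. I will handle it by purifying and then compressing to the Schmidt supports, checking that the winning probability is unchanged. For the maximally entangled case I will simply note that this reduction is trivial. Everything else is a direct application of the theorem.
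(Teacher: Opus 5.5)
Your proposal is correct and follows the paper's route: the paper says the corollary follows directly from Theorem~\ref{thm:game-channel-transfer}, and the clique-partition bound of Lemma~\ref{lem:clique-partition-saturation} supplies $\alpha^*_{\mathrm{ME}}(\Gamma_G)\le\alpha^*(\Gamma_G)\le q$. Your additional steps are precisely what the paper leaves implicit: purifying and compressing to the Schmidt supports, taking $D=\rho/\sqrt{\Tr(\rho^2)}$ so that $\ket{\chi_D}$ coincides with the code's state, and specializing to $D=I_d/\sqrt d$, where $\ket{\psi_D}=\ket{\chi_D}=\ket{\Phi_d}$.
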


The correspondence in Theorem~\ref{thm:game-channel-transfer} also transfers state certification. If perfect play in $G$ certifies $\ket{\psi_D}$, then transmitting $q$ messages with zero error certifies $\ket{\chi_D}$, in both cases for finite-dimensional pure shared resources and up to local isometries and an auxiliary pure state.

Applied to Flagged Magic-Square, the reduction gives the following channel separation.

\begin{corollary}[Flagged Magic-Square channel]
\label{cor:FMS-zero-error}
There is a $50$-input classical channel $\mathcal N_{\mathrm{FMS}}$ whose confusability graph satisfies
\[
    \alpha^*(\Gamma_{\mathrm{FMS}})=10,
    \qquad
    \alpha^*_{\mathrm{ME}}(\Gamma_{\mathrm{FMS}})\leq9.
\]
\end{corollary}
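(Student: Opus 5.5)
We apply the game-to-channel construction of Section~\ref{sec:game-channel-construction} to $G=G_{\mathrm{FMS}}$ and let $\mathcal N_{\mathrm{FMS}}$ be any classical channel with confusability graph $\Gamma_{\mathrm{FMS}}:=\Gamma_{G_{\mathrm{FMS}}}$. Such a channel exists by the remark in Section~\ref{sec:zero-error-preliminaries} that every finite graph is a confusability graph; for instance, each vertex outputs a uniformly random incident edge. The input count is immediate. Each of the $|X|+|Y|=10$ questions has exactly five answers: four Magic-Square answers plus $\bot$ for rows and columns, and two signs plus three $C_1$ or $R_1$ answers for bridge questions. This gives $|V(\Gamma_{\mathrm{FMS}})|=10\cdot 5=50$ inputs and $q=10$. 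One small point needs care. The construction assumes fixed answer sets $\mathcal A,\mathcal B$ shared by all questions, while here the labels differ by question. I would handle this by relabelling each question's five answers by a common five-element set, which changes neither the predicate nor the graph up to isomorphism. The question distribution is uniform and hence has full support.

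For the first equality, Proposition~\ref{prop:perfect} provides a perfect finite-dimensional strategy using the state $\ket\psi$ of \eqref{eq:honest-state}. The first line of Corollary~\ref{cor:game-channel-capacities} then gives $\alpha^*(\Gamma_{\mathrm{FMS}})=10$. Concretely, by Theorem~\ref{thm:game-channel-transfer} with $D=D_0/3$, which satisfies $\Tr(D^2)=(1+4\cdot 2)/9=1$, the code uses the state with Schmidt coefficients proportional to $(1,2^{1/4},2^{1/4},2^{1/4},2^{1/4})$. The upper bound $\alpha^*\le 10$ comes from the clique partition into the $10$ answer blocks, via Lemma~\ref{lem:clique-partition-saturation}.

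For the second bound, Lemma~\ref{lem:clique-partition-saturation} already gives $\alpha^*_{\mathrm{ME}}(\Gamma_{\mathrm{FMS}})\le\alpha^*(\Gamma_{\mathrm{FMS}})\le 10$. Suppose equality held. By the second line of Corollary~\ref{cor:game-channel-capacities}, $G_{\mathrm{FMS}}$ would then have a perfect finite-dimensional maximally entangled strategy, contradicting Corollary~\ref{cor:FMS-ME-gap}. Since $\alpha^*_{\mathrm{ME}}$ is an integer, it is at most $9$.

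The only step that needs real attention is this last implication: a maximally entangled $10$-message code must yield a maximally entangled perfect strategy. I would check it directly against the converse direction of Theorem~\ref{thm:game-channel-transfer}. There, $\rho=I_d/d$ gives $D'=\rho/\sqrt{\Tr\rho^2}=I_d/\sqrt d$, which is exactly the maximally entangled Schmidt operator. The resulting strategy therefore uses $\ket{\Phi_d}$ and is finite-dimensional, so Corollary~\ref{cor:FMS-ME-gap} applies. Everything else is bookkeeping.
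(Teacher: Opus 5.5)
Your proposal is correct and follows the paper's argument. The paper states this corollary as a direct application of Corollary~\ref{cor:game-channel-capacities} (via Theorem~\ref{thm:game-channel-transfer}) to $G_{\mathrm{FMS}}$: Proposition~\ref{prop:perfect} gives the value $10$, Corollary~\ref{cor:FMS-ME-gap} rules out a maximally entangled $10$-message code, and the $10\cdot 5=50$ inputs are the local question--answer pairs. Your two extra checks are the right bookkeeping: relabeling each question's answers by a common five-element set, and verifying that $\rho=I_d/d$ gives $D'=I_d/\sqrt d$, i.e., the state $\ket{\Phi_d}$.
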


An optimal code uses the state
\begin{equation}
    \ket{\chi_{\mathrm{FMS}}}
      =\frac{1}{\sqrt{1+4\sqrt2}}
       \left(
          \ket{\bot,\bot}
          +2^{1/4}\sum_{z\in\{0,1\}^2}\ket{z,z}
       \right).
\label{eq:FMS-zero-error-state}
\end{equation}
\clearpage
\appendix
\section*{Appendices}
\addcontentsline{toc}{section}{Appendices}

\section{Operator lemmas and Magic-Square rigidity}\label{app:operator-rigidity}

\subsection{Elementary operator lemmas}\label{app:operator-lemmas}

The following facts are used in the proof of Theorem~\ref{thm:FMS-separation} and in Section~\ref{app:tracial-MS-rigidity}.

\begin{lemma}[Component matching]\label{lem:component-matching}
Let $\{E_\alpha\}_{\alpha\in\mathcal A}$ and $\{F_\beta\}_{\beta\in\mathcal B}$ be POVMs on the same Hilbert space.  Let $S\subseteq\mathcal A$ and $T\subseteq\mathcal B$.  Suppose
\[
E_\alpha F_\beta=0
\quad\text{whenever}\quad
(\alpha\in S,\beta\notin T)
\text{ or }
(\alpha\notin S,\beta\in T).
\]
Then
\[
\sum_{\alpha\in S}E_\alpha
 =\sum_{\beta\in T}F_\beta
 =:P,
\]
and $P$ is a projection.
\end{lemma}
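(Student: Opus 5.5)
Set $P_E:=\sum_{\alpha\in S}E_\alpha$ and $P_F:=\sum_{\beta\in T}F_\beta$. I will first prove $P_E=P_F$ by multiplying the two POVM completeness relations against each other, and then show that the common value is a projection.

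\emph{Step 1.} Multiply $P_E$ on the right by $I=\sum_{\beta}F_\beta$. Terms with $\beta\notin T$ vanish, since each $E_\alpha F_\beta$ with $\alpha\in S,\beta\notin T$ is zero. Hence
\[
P_E=\sum_{\alpha\in S,\beta\in T}E_\alpha F_\beta .
\]
Now multiply $I=\sum_\alpha E_\alpha$ on the right by $P_F$. Terms with $\alpha\notin S$ vanish, since $E_\alpha F_\beta=0$ for $\alpha\notin S,\beta\in T$. Hence
\[
P_F=\sum_{\alpha\in S,\beta\in T}E_\alpha F_\beta .
\]
The two right-hand sides coincide, so $P_E=P_F=:P$. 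Equivalently, $P_E=P_EP_F=P_F$, using $P_E(I-P_F)=0$ and $(I-P_E)P_F=0$; these identities are exactly the two hypotheses summed over the relevant indices.

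\emph{Step 2.} The identity $P_E=P_EP_F$ now reads $P=P^2$. Since $P$ is a sum of positive operators, it is self-adjoint, so $P$ is an orthogonal projection.

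No step is a genuine obstacle; the argument is purely algebraic. The only point needing care is that all sums are finite, so there are no convergence issues. I also only use right multiplication by the $F_\beta$, which matches the one-sided form $E_\alpha F_\beta=0$ in which the hypothesis is stated. The adjoint relations $F_\beta E_\alpha=0$ follow automatically by taking adjoints, but they are not needed.
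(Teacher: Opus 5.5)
Your proof is correct and matches the paper's argument: both use completeness of each POVM together with the two families of zero products to get $\sum_{\alpha\in S}E_\alpha=\bigl(\sum_{\alpha\in S}E_\alpha\bigr)\bigl(\sum_{\beta\in T}F_\beta\bigr)=\sum_{\beta\in T}F_\beta$, and then read off $P=P^2$. Your remark that $P$ is self-adjoint, being a sum of positive operators, and hence an orthogonal projection, spells out a step the paper leaves implicit.
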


\begin{proof}
Set $E_S=\sum_{\alpha\in S}E_\alpha$ and $F_T=\sum_{\beta\in T}F_\beta$.  Completeness and the cross-component zero products give
\[
E_S=E_SF_T,
\qquad
F_T=E_SF_T.
\]
Thus $E_S=F_T=P$, and $P=P^2$.
\end{proof}

\begin{lemma}[Binary matching]\label{lem:binary-matching}
Let $E^+,E^-,F^+,F^-\succeq0$ satisfy
\[
E^++E^-=F^++F^-=I.
\]
Fix $s\in\T$.  If
\[
E^aF^b=0
\qquad\text{whenever }ab=-s,
\]
then all four effects are projections and
\[
E^a=F^{sa}
\qquad(a\in\T).
\]
\end{lemma}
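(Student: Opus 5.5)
The plan is to reduce everything to Lemma~\ref{lem:component-matching}, applied twice with singleton sets, and then to read off projectivity from the resulting equalities.

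\textbf{Step 1 (first matching).} Take $S=\{+\}$ for the $E$-measurement and $T=\{s\}$ for the $F$-measurement. The hypothesis of Lemma~\ref{lem:component-matching} asks that $E^\alpha F^\beta=0$ whenever exactly one of $\alpha\in S$, $\beta\in T$ holds. These pairs are $(\alpha,\beta)=(+,-s)$ and $(-,s)$. In both cases $\alpha\beta=-s$, so they are losing pairs and the products vanish by assumption. The lemma then gives $E^+=F^{s}$, and this common operator is a projection.

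\textbf{Step 2 (second matching).} Repeat the argument with $S=\{-\}$ and $T=\{-s\}$. The relevant pairs are $(-,s)$ and $(+,-s)$, which are the same losing pairs as before. This yields $E^-=F^{-s}$, again a projection.

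\textbf{Step 3 (conclusion).} Since $s\cdot(\pm1)$ runs over both signs, these two equalities together state exactly that $E^a=F^{sa}$ for $a\in\T$. Every one of the four effects appears in one of the equalities, so all four are projections.

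Alternatively, one can bypass the lemma and compute directly. From $E^+=E^+(F^++F^-)=E^+F^{s}$ and $F^{s}=(E^++E^-)F^{s}=E^+F^{s}$ we get $E^+=F^s$. This also gives $E^+=E^+F^s=(E^+)^2$, so $E^+$ is idempotent. Because the effects are self-adjoint, idempotence means they are projections.

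There is no real obstacle here. The only point needing care is checking that the cross-component pairs in each application are exactly the losing pairs, that is, the pairs with $ab=-s$.
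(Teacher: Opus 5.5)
Your proposal is correct and matches the paper's proof: the paper computes directly $E^+=E^+(F^++F^-)=E^+F^+$ and $F^+=(E^++E^-)F^+=E^+F^+$ for $s=+1$, swapping the roles of $F^\pm$ when $s=-1$. This is exactly your alternative argument, and it is the singleton case of Lemma~\ref{lem:component-matching}, so your reduction to that lemma is the same reasoning, and it is not circular because that lemma is proved independently. The one point needing care is that the cross-component pairs $(+,-s)$ and $(-,s)$ are exactly the pairs with $ab=-s$, and you verify this correctly.
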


\begin{proof}
For $s=+1$, the forbidden pairs are $(+,-)$ and $(-,+)$.  Hence
\[
E^+=E^+(F^++F^-)=E^+F^+,
\qquad
F^+=(E^++E^-)F^+=E^+F^+.
\]
Therefore $E^+=F^+=(E^+)^2$.  The same argument applies to the minus effects.  For $s=-1$, the identical calculation pairs $E^+$ with $F^-$ and $E^-$ with $F^+$.
\end{proof}

\subsection{Proof of Magic-Square rigidity}\label{app:tracial-MS-rigidity}

\begin{proof}[Proof of Proposition~\ref{prop:tracial-MS-rigidity}]
For each cell $(i,j)$ and sign $a\in\T$, define the coarse effects
\[
P_{ij}^a:=\sum_{s:s_j=a}A_s^i,
\qquad
Q_{ij}^a:=\sum_{t:t_i=a}B_t^j.
\]
Equation~\eqref{eq:tracial-MS-zero} gives
\[
P_{ij}^+Q_{ij}^-=P_{ij}^-Q_{ij}^+=0.
\]
Lemma~\ref{lem:binary-matching} implies
\[
P_{ij}^a=Q_{ij}^a,
\]
and these common operators are complementary projections.  Set
\[
M_{ij}:=P_{ij}^+-P_{ij}^-.
\]
Each $M_{ij}$ is a self-adjoint unitary.

For a fixed row $i$, every fine effect $A_s^i$ is supported on $P_{ij}^{s_j}$ for each $j$.  Consequently,
\[
P_{ij}^{a}P_{ik}^{b}
 =\sum_{s:s_j=a,\ s_k=b}A_s^i
 =P_{ik}^{b}P_{ij}^{a}.
\]
Thus the three row observables commute, and their joint spectral projections are the original effects $A_s^i$. In particular, the row POVM is projective.  The same argument applies to every column.  Since only valid parity assignments occur,
\begin{equation}
M_{i1}M_{i2}M_{i3}=I,
\qquad
M_{1j}M_{2j}M_{3j}=\kappa_j I.
\label{eq:tracial-MS-relations}
\end{equation}
Here $\kappa_j$ are the column parities defined in~\eqref{eq:column-parities}.

Set
\[
Z_1=M_{11},
\quad Z_2=M_{12},
\quad X_2=M_{21},
\quad X_1=M_{22}.
\]
The row and column commutation relations give
\begin{equation}
[Z_1,Z_2]=[X_1,X_2]=[Z_1,X_2]=[Z_2,X_1]=0.
\label{eq:tracial-cross-comm}
\end{equation}
The product relations determine
\[
M_{13}=Z_1Z_2,
\quad M_{23}=X_2X_1,
\quad M_{31}=Z_1X_2,
\quad M_{32}=Z_2X_1,
\]
and the third-column relation gives
\[
M_{33}=-(Z_1Z_2)(X_2X_1).
\]
Substituting these identities into the third-row relation in \eqref{eq:tracial-MS-relations}, and using only \eqref{eq:tracial-cross-comm}, yields
\[
Z_1X_1Z_1X_1=-I,
\]
so
\begin{equation}
Z_1X_1=-X_1Z_1.
\label{eq:tracial-first-anti}
\end{equation}
Moreover, $M_{13}$ and $M_{23}$ commute because they lie in the same column.  Hence
\[
(Z_1Z_2)(X_2X_1)=(X_2X_1)(Z_1Z_2).
\]
Using \eqref{eq:tracial-cross-comm} and \eqref{eq:tracial-first-anti}, cancellation gives
\begin{equation}
Z_2X_2=-X_2Z_2.
\label{eq:tracial-second-anti}
\end{equation}

A pair of anticommuting self-adjoint unitaries is unitarily equivalent to $Z,X$ on a qubit tensored with an identity: decompose into the $\pm1$ eigenspaces of the first unitary and use the second to identify the two eigenspaces.  Applying this first to $Z_1,X_1$, and then inside their commutant to $Z_2,X_2$, gives
\[
\mathcal K_0\cong\C^2\otimes\C^2\otimes\mathcal K
\]
with
\[
Z_1=Z\otimes I\otimes I,
\quad X_1=X\otimes I\otimes I,
\quad Z_2=I\otimes Z\otimes I,
\quad X_2=I\otimes X\otimes I.
\]
All nine $M_{ij}$ are therefore exactly the observables in \eqref{eq:magic-square}, tensored with $I_{\mathcal K}$.  Their joint spectral projections are the canonical row and column outcome projections, also tensored with $I_{\mathcal K}$.
\end{proof}

\section{Exact self-testing}\label{app:self-testing}

This appendix shows that the Flagged Magic-Square game self-tests the state in \eqref{eq:honest-state} and the measurements of Section~\ref{sec:honest-strategy}.

\subsection{Statement of the self-test}\label{app:self-test-statement}

Let $M_a^q$ denote Alice's ideal measurement effect for question $q$ and answer $a$ in Section~\ref{sec:honest-strategy}.  Let $N_b^r$ denote the corresponding unconjugated ideal effect on Bob's side, so Bob's physical ideal effect is $\overline{N_b^r}$.

\begin{theorem}[Exact self-test]\label{thm:self-test}
Let
\[
\bigl(\ket\Psi,\{A_a^q\}_{a,q},\{B_b^r\}_{b,r}\bigr)
\]
be a finite-dimensional perfect strategy for $G_{\mathrm{FMS}}$ with a pure shared state.  Let $\mathcal H_A^0$ and $\mathcal H_B^0$ be the Schmidt supports of $\ket\Psi$.  Then there exist a finite-dimensional Hilbert space $\mathcal J$, a unit vector $\ket{\mathrm{aux}}\in\mathcal J\otimes\mathcal J$, and unitary identifications
\[
V_A:\mathcal H_A^0\xrightarrow{\ \cong\ }\mathcal H\otimes\mathcal J,
\qquad
V_B:\mathcal H_B^0\xrightarrow{\ \cong\ }\mathcal H\otimes\mathcal J
\]
such that
\begin{equation}
(V_A\otimes V_B)\ket\Psi
 =\ket\psi\otimes\ket{\mathrm{aux}},
\label{eq:self-tested-state}
\end{equation}
where $\ket\psi$ is the state in \eqref{eq:honest-state}.  Moreover, after compressing all measurement effects to the Schmidt supports,
\begin{equation}
V_A A_a^q V_A^*=M_a^q\otimes I_{\mathcal J},
\qquad
V_B B_b^r V_B^*=\overline{N_b^r}\otimes I_{\mathcal J}
\label{eq:self-tested-measurements}
\end{equation}
for every question and answer.  Thus $G_{\mathrm{FMS}}$ exactly self-tests both the state and the measurements of the strategy in Section~\ref{sec:honest-strategy}.
\end{theorem}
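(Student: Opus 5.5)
The plan is to translate the perfect strategy into the operator picture used in Theorem~\ref{thm:FMS-separation}, now with a general Schmidt operator instead of a trace. Write $\ket\Psi=\operatorname{vec}(D)$ with $D>0$ on the Schmidt supports, identified via the Schmidt basis with a common space $\mathcal H_0$. Set $\widetilde B_b^r:=(B_b^r)^T$, transposed in that basis. A losing pair has zero probability, $\Tr(A_a^qD\widetilde B_b^rD)=0$, and by positivity this gives $(A_a^q)^{1/2}D(\widetilde B_b^r)^{1/2}=0$. Hence $A_a^qD\widetilde B_b^r=0$. Conjugating, we pass to $\widehat B_b^r:=D\widetilde B_b^rD^{-1}$, which are not self-adjoint. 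The cleaner route is to apply the support argument twice. Since $D$ has full rank, $A_a^q D\widetilde B_b^r=0$ means that $\operatorname{range}(D\widetilde B_b^r D)$ and $\operatorname{range}(A_a^q)$ are orthogonal.

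I will first run Step~1 of Theorem~\ref{thm:FMS-separation}. The flag effects satisfy $G_i D(I-H_j)=0$ and $(I-G_i)DH_j=0$. Summing gives $G_iD=G_iDH_j=DH_j$, so $G_iD=DH_j$ for all $i,j$. Hence all $G_i$ are equal to some $G$, all $H_j$ are equal to some $H$, and $GD=DH$. Taking adjoints gives $DG=HD$, and squaring yields $GD^2=D^2G$ only after we know $G$ is a projection. I would get projectivity by the Lemma~\ref{lem:component-matching} type argument adapted to $D$: $G=GDHD^{-1}$ and so on. Then $H=D^{-1}GD$ is an idempotent that is positive, hence an orthogonal projection, and it commutes with $D$.

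The same reasoning, applied inside the Magic-Square sector, upgrades the rigidity argument (Proposition~\ref{prop:tracial-MS-rigidity}) to $D$-twisted relations. Lemma~\ref{lem:binary-matching} becomes $P_{ij}^aD=DQ_{ij}^a$, which forces projectivity and shows that $D$ commutes with every $M_{ij}$. Consequently $D|_{P}=I_4\otimes D_{\mathcal K}$, and Alice's and Bob's Magic-Square measurements are canonical $\otimes I_{\mathcal K}$, with Bob's equal to Alice's via $\widetilde B=A$.

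Next come the bridges. Calibration gives $E_x^++E_x^-=P^\perp+\Pi_{\cans{x}}$ in this twisted form. The frustrated-sign cells now read $A_x D B_y=(-1)^{xy}S_x^ADS_y^B$. I would write $D=D_\bot\oplus (I_4\otimes D_{\mathcal K})$ and realize $A_x$ as a $2\times2$ block operator on $P^\perp\mathcal H_0\oplus(\ket{c_x}\otimes\mathcal K)$, with $P^\perp\mathcal H_0$ identified with a space $\mathcal J$. The four sign relations should force three things: the off-diagonal block of $A_x$ is a unitary $W$ between the flag space and $\mathcal K$ (the same $W$ for $x=0,1$), the diagonal blocks vanish, and $D_\bot = W^* D_{\mathcal K}W/\sqrt2$. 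This is where the $1/\sqrt2$ overlap from \eqref{eq:distinguished-overlaps} enters, exactly mirroring the computation \eqref{eq:twisted-amplitude}. Setting $\mathcal J:=\mathcal K$ and $\ket{\mathrm{aux}}:=\operatorname{vec}(D_{\mathcal K})$ (normalized) then gives $D\cong D_0\otimes D_{\mathcal K}$ up to normalization, and $V_A,V_B$ are built from the rigidity isometry and $W$. This yields \eqref{eq:self-tested-state} and \eqref{eq:self-tested-measurements}. Bob's side follows by transposing back, which produces the complex conjugates $\overline{N_b^r}$.

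The main obstacle is the bridge step: extracting from the operator equations that the sign effects are exactly the rank-one-times-identity projections of \eqref{eq:T-basis} and that $D_\bot$ is tied to $D_{\mathcal K}$ through a unitary. I would first try to mimic Step~3 of Theorem~\ref{thm:FMS-separation}, compressing the twisted relations by $P^\perp$ and by $\Pi_{\cans{x}}$. Using all four cells, this should show that the off-diagonal block satisfies a polar-decomposition identity forcing it to be isometric in both directions. Only at the end would I use finite-dimensionality, in the form of a dimension count, to get unitarity.
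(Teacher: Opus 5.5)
Your outline follows the paper's route. The steps are: $\ket\Psi=\operatorname{vec}(D)$ with Bob's effects transposed, losing pairs giving $EDF=0$, $D$-weighted binary and component matching at the flag and calibration cells, Schmidt-form Magic-Square rigidity giving $D=D_{\mathrm f}\oplus(I_4\otimes D_{\mathrm m})$, a block analysis of the bridge observables, and extraction.

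Two small points first. $GD^2=D^2G$ follows at once from $GD=DH$ and its adjoint $DG=HD$, with no projectivity needed. For \eqref{eq:self-tested-measurements} you must also apply component matching to each singleton nonsign bridge answer, to get $E_x^\gamma=\Pi_\gamma$ and $F_y^\rho=\Pi_\rho$.

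The genuine gap is the bridge step, which carries the whole content of the theorem. You list three conclusions: vanishing diagonal blocks, a common unitary off-diagonal block, and $D_{\mathrm f}=W^*D_{\mathrm m}W/\sqrt2$ (your $D_\bot$, $D_{\mathcal K}$). These are asserted as what the relations ``should force'', not derived. Moreover, the only relation you write down, $X_xDY_y=s_{xy}S_x^ADS_y^B$, is just the \emph{sum} of the two losing relations $(S_x^A+aX_x)D(S_y^B-s_{xy}aY_y)=0$, $a=\pm1$. As an algebraic identity it does not determine the off-diagonal block. For example, take $\dim\mathcal L=\dim\mathcal K=1$ and $D_{\mathrm m}=\sqrt2D_{\mathrm f}$; then all four cells hold with off-diagonal blocks $U=u$, $V=1/\bar u$ for any $u\neq0$. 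Step~3 of Theorem~\ref{thm:FMS-separation}, which you propose to mimic, kills the diagonal blocks using the \emph{other} identity.

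The missing ingredient is the difference identity $X_xDS_y^B=s_{xy}S_x^ADY_y$, which is linear. Write $X_x$ with diagonal blocks $\Gamma_x,\Delta_x$ and off-diagonal block $U_x\colon\mathcal L\to\mathcal K$, write $Y_y$ likewise with $\Lambda_y,\Theta_y,V_y$, and use $S_x^ADS_y^B=D_{\mathrm f}\oplus s_{xy}D_{\mathrm m}/\sqrt2$. The difference identity then gives
$\Gamma_xD_{\mathrm f}=s_{xy}D_{\mathrm f}\Lambda_y$, $\Delta_xD_{\mathrm m}=s_{xy}D_{\mathrm m}\Theta_y$, $U_xD_{\mathrm f}=D_{\mathrm m}V_y/\sqrt2$ and $U_x^*D_{\mathrm m}=\sqrt2D_{\mathrm f}V_y^*$.

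From these relations:
\begin{enumerate}
\item The sign pattern of the four cells forces all diagonal blocks to vanish and gives $U_0=U_1=:U$ and $V_0=V_1=:V$.
\item Only then do the diagonal blocks of your sum identity yield $U^*D_{\mathrm m}V=\sqrt2D_{\mathrm f}$ and $UD_{\mathrm f}V^*=D_{\mathrm m}/\sqrt2$.
\item Combining the two sets of relations gives $U^*UD_{\mathrm f}=D_{\mathrm f}$ and $UU^*D_{\mathrm m}=D_{\mathrm m}$, and similarly for $V$. So $U$ and $V$ are unitary outright. No dimension count is needed; $\dim\mathcal L=\dim\mathcal K$ is a consequence.
\end{enumerate}

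Bob's block must be treated as well. Using a single $W$ on both sides is legitimate only because $\sqrt2D_{\mathrm f}=(U^*D_{\mathrm m}U)(U^*V)$ is positive, which forces $U^*V=I$. Alternatively, keep $U$ and $V$ separate, as the paper's $W_A,W_B$ do.

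Your sum identity alone can also be completed in finite dimensions, but only by adding positivity of the sign effects. Invertibility gives $X_0=X_1$ and $Y_0=Y_1$. The $(1,1)$ cell makes $X$ anticommute with $I_{\mathcal L}\oplus(-I_{\mathcal K})$. Hermiticity of $Y$ gives $V=U^{-*}$, and $\|X\|,\|Y\|\le1$ then force $U$ to be unitary. None of this is in the proposal.
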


\begin{remark}
Theorem~\ref{thm:self-test} immediately gives another proof that no perfect maximally entangled strategy exists: the reduced state of $\ket\psi$ is $\frac19(1\oplus2I_4)$, so tensoring it with any nonzero auxiliary reduced state cannot produce a scalar multiple of the identity on its support.
\end{remark}

\subsection{Weighted matching lemmas}\label{app:weighted-matching}

We recall elementary operator lemmas~\cite{Mancinska2014} and Magic-Square rigidity in the Schmidt-operator form used below, including short proofs for completeness.

\begin{lemma}[Weighted binary matching]\label{lem:weighted-binary}
Let $D>0$, and let $E^+,E^-,F^+,F^-\succeq0$ satisfy
\[
E^++E^-=F^++F^-=I.
\]
Fix $s\in\T$.  If
\[
E^aDF^b=0
\qquad\text{whenever }ab=-s,
\]
then all four effects are projections commuting with $D$, and
\[
E^a=F^{sa}
\qquad(a\in\T).
\]
\end{lemma}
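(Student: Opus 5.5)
The plan is to reduce to $s=+1$ and then run the same absorption argument as in Lemma~\ref{lem:binary-matching}, now carrying $D$ along, to get intertwining relations rather than outright equalities. The case $s=-1$ follows from $s=+1$ by swapping the labels of $F^+$ and $F^-$. This turns the hypothesis into $E^aD\widetilde F^{b}=0$ for $a\ne b$, where $\widetilde F^b:=F^{-b}$, and the conclusion $E^a=\widetilde F^a$ is exactly $E^a=F^{-a}$.

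\textbf{Intertwining.} For $s=+1$ the hypotheses are $E^+DF^-=0$ and $E^-DF^+=0$. Using $F^++F^-=I$ gives
\[
E^+D=E^+D(F^++F^-)=E^+DF^+ .
\]
Using $E^++E^-=I$ gives
\[
DF^+=(E^++E^-)DF^+=E^+DF^+ .
\]
Hence $E^+D=DF^+$, and by the same computation $E^-D=DF^-$. Taking adjoints, and using that $D$ and the effects are self-adjoint, gives $DE^+=F^+D$.

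\textbf{Commutation with $D$.} Combining the two relations,
\[
E^+D^2=(E^+D)D=DF^+D=D(DE^+)=D^2E^+ .
\]
So $E^+$ commutes with $D^2$. Since $D>0$, $D$ is a continuous function of $D^2$ (its positive square root), so $E^+$ commutes with $D$. Then $DE^+=E^+D=DF^+$, and invertibility of $D$ gives $E^+=F^+$. Completeness then gives $E^-=I-E^+=I-F^+=F^-$, and this common operator also commutes with $D$.

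\textbf{Projectivity.} Substitute $F^-=I-E^+$ into $E^+DF^-=0$ and use the commutation with $D$:
\[
0=E^+D(I-E^+)=D\,E^+(I-E^+).
\]
Invertibility of $D$ yields $(E^+)^2=E^+$. Hence all four effects are projections commuting with $D$, and $E^a=F^{a}$.

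\textbf{Main obstacle.} The only genuinely new step relative to Lemma~\ref{lem:binary-matching} is showing that the effects commute with $D$. Before that is known, $E^+D=DF^+$ does not give $E^+=F^+$ directly. The key is to notice that combining the relation with its adjoint forces commutation with $D^2$, and hence with $D$ via functional calculus for $D>0$. Throughout, $D>0$ is used in the sense of positive definite, hence invertible, which is the setting of the Schmidt supports.
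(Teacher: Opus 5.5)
Your proof is correct and follows the paper's argument essentially step for step. Both reduce to $s=+1$, derive $E^+D=DF^+$ by absorption, and combine it with its adjoint to get commutation with $D^2$, hence with $D$ by functional calculus. Both then conclude $E^+=F^+$ and $(E^+)^2=E^+$ from $DE^+(I-E^+)=0$, using invertibility of $D$.
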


\begin{proof}
It suffices to treat $s=+1$, since the case $s=-1$ follows by exchanging $F^+$ and $F^-$.  The forbidden pairs give
\[
E^+D=E^+D(F^++F^-)=E^+DF^+,
\qquad
DF^+=(E^++E^-)DF^+=E^+DF^+.
\]
Hence $E^+D=DF^+$.  Taking adjoints gives $DE^+=F^+D$, and therefore
\[
E^+D^2=DF^+D=D^2E^+.
\]
Thus $E^+$ commutes with $D$ by functional calculus, and $E^+D=DF^+$ then gives $E^+=F^+$.  Finally,
\[
0=E^+D(I-F^+)=D E^+(I-E^+),
\]
so invertibility of $D$ gives $(E^+)^2=E^+$.  The same conclusions hold for the complementary effects.
\end{proof}

\begin{lemma}[Weighted component matching]\label{lem:weighted-component}
Let $D>0$, let $\{E_\alpha\}_{\alpha\in\mathcal A}$ and $\{F_\beta\}_{\beta\in\mathcal B}$ be POVMs on the same finite-dimensional Hilbert space, and let $S\subseteq\mathcal A$, $T\subseteq\mathcal B$.  Suppose
\[
E_\alpha D F_\beta=0
\quad\text{whenever}\quad
(\alpha\in S,\beta\notin T)
\text{ or }
(\alpha\notin S,\beta\in T).
\]
Then
\[
\sum_{\alpha\in S}E_\alpha
 =\sum_{\beta\in T}F_\beta
 =:P,
\]
and $P$ is a projection commuting with $D$.
\end{lemma}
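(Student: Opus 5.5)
The plan is to follow the proof of Lemma~\ref{lem:component-matching}, now inserting $D$ between the factors, and then to extract commutation with $D$ the same way as in Lemma~\ref{lem:weighted-binary}. Write
\[
E_S:=\sum_{\alpha\in S}E_\alpha,
\qquad
F_T:=\sum_{\beta\in T}F_\beta .
\]
Then $E_S$ and $I-E_S$ are sums of effects indexed by $S$ and by $\mathcal A\setminus S$, and similarly for $F_T$. Summing the hypothesis over the relevant index pairs gives
\[
E_S D (I-F_T)=0,
\qquad
(I-E_S) D F_T=0 .
\]
Expanding both identities gives $E_S D=E_S D F_T$ and $D F_T=E_S D F_T$, so
\[
E_S D = D F_T .
\]

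Next we show that $E_S$ commutes with $D$, exactly as in Lemma~\ref{lem:weighted-binary}. Taking adjoints of $E_S D=DF_T$ and using that all three operators are self-adjoint gives $D E_S = F_T D$. Therefore
\[
E_S D^2 = D F_T D = D^2 E_S .
\]
Since $D>0$, $D$ is a polynomial in $D^2$ on the finite spectrum, so $E_S$ commutes with $D$.

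With commutation in hand, $E_S D = D F_T$ becomes $D E_S = D F_T$. Invertibility of $D$ then gives $E_S=F_T=:P$, and $P$ commutes with $D$. It remains to show that $P$ is a projection. Substituting $F_T=E_S$ into $E_S D(I-F_T)=0$ gives
\[
0 = E_S D (I-E_S) = D\,E_S(I-E_S),
\]
so $E_S^2=E_S$ after cancelling $D$.

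The only step that is not purely formal is the functional-calculus passage from $[E_S,D^2]=0$ to $[E_S,D]=0$. It needs $D\geq 0$, so that $D$ is the positive square root of $D^2$. Finite dimensionality makes this elementary, since $D$ can be written as an interpolation polynomial in $D^2$.
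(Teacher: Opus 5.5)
Your proof is correct and is essentially the paper's argument. The paper coarse-grains to $E_S$ and $F_T$ and then invokes Lemma~\ref{lem:weighted-binary} with $s=+1$, whose proof is exactly what you carry out inline: $E_SD=DF_T$, then $[E_S,D^2]=0$, then commutation with $D$, cancellation of $D$, and idempotence.
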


\begin{proof}
Coarse-grain the two POVMs to
\[
E_S:=\sum_{\alpha\in S}E_\alpha,
\qquad
F_T:=\sum_{\beta\in T}F_\beta.
\]
The hypotheses are exactly the losing relations for perfectly matching binary measurements.  Lemma~\ref{lem:weighted-binary} with $s=+1$ gives the result.
\end{proof}

\subsection{Schmidt-form Magic-Square rigidity}\label{app:schmidt-MS-rigidity}

We use the following reformulation of Magic-Square rigidity~\cite{WuBancalMcKagueScarani2016,ColadangeloStark}.

\begin{proposition}[Schmidt-form Magic-Square rigidity]\label{prop:schmidt-MS-rigidity}
Let $D>0$ act on a finite-dimensional Hilbert space $\mathcal K_0$. Let $\{A_s^i\}_s$ and $\{B_t^j\}_t$, $i,j=1,2,3$, be POVMs on $\mathcal K_0$, with outcomes indexed by the valid Magic-Square answers to $R_i$ and $C_j$, respectively (Section~\ref{sec:MS-notation}). Suppose
\begin{equation}
A_s^i D B_t^j=0
\qquad\text{whenever }s_j\ne t_i.
\label{eq:schmidt-MS-zero}
\end{equation}
Then all six POVMs are projective.  Moreover, there exist a Hilbert space $\mathcal K$, a positive definite operator $D_{\mathcal K}$ on $\mathcal K$, and a unitary identification
\[
\mathcal K_0\cong\C^2\otimes\C^2\otimes\mathcal K
\]
under which the six measurements are the joint-eigenbasis measurements of \eqref{eq:magic-square}, tensored with $I_{\mathcal K}$, and
\[
D=I_4\otimes D_{\mathcal K}.
\]
\end{proposition}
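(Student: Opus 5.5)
The plan is to copy the proof of Proposition~\ref{prop:tracial-MS-rigidity}, replacing Lemma~\ref{lem:binary-matching} with Lemma~\ref{lem:weighted-binary}, and then to show that $D$ commutes with every resulting observable.

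First, for each cell $(i,j)$ and sign $a\in\T$ we form the coarse effects
\[
P_{ij}^a:=\sum_{s:s_j=a}A_s^i,
\qquad
Q_{ij}^a:=\sum_{t:t_i=a}B_t^j.
\]
Summing \eqref{eq:schmidt-MS-zero} over the relevant fine outcomes gives $P_{ij}^aDQ_{ij}^b=0$ whenever $a\neq b$. Lemma~\ref{lem:weighted-binary} with $s=+1$ then shows that $P_{ij}^a=Q_{ij}^a$, that these are complementary projections, and that they commute with $D$. Consequently each $M_{ij}:=P_{ij}^+-P_{ij}^-$ is a self-adjoint unitary commuting with $D$.

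Next we check projectivity. Fix a row $i$. For each $j$, the effect $A_s^i$ satisfies $A_s^i\le P_{ij}^{s_j}$, and $P_{ij}^{s_j}$ is a projection. Hence $A_s^i$ is supported in $P_{ij}^{s_j}$, so $A_s^iP_{ij}^{s_j}=A_s^i$ and $A_s^iP_{ij}^{-s_j}=0$. The argument from Appendix~\ref{app:tracial-MS-rigidity} then applies verbatim: the three projections $P_{i1},P_{i2},P_{i3}$ commute, and $A_s^i$ equals their joint spectral projection $\prod_jP_{ij}^{s_j}$. This works because a valid answer $s$ is determined by $(s_1,s_2)$, and the effects sum to the identity. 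The same reasoning applies to each column with the $Q$'s, which coincide with the $P$'s. Therefore all six POVMs are projective, the products satisfy \eqref{eq:tracial-MS-relations}, and the row and column commutation relations hold.

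From here the algebra of Appendix~\ref{app:tracial-MS-rigidity} is purely operator-theoretic and carries over unchanged. It produces anticommuting pairs $Z_1,X_1$ and $Z_2,X_2$ that commute across the pairs, and hence the identification $\mathcal K_0\cong\C^2\otimes\C^2\otimes\mathcal K$ with all measurements equal to the canonical ones tensored with $I_{\mathcal K}$. Finite dimensionality is automatic here.

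The remaining point, which I expect to be the only real addition, is the form of $D$. Since $D$ commutes with $Z_1,X_1,Z_2,X_2$, it commutes with the algebra these operators generate. Under the identification, that algebra is $M_2(\C)\otimes M_2(\C)\otimes I_{\mathcal K}=M_4(\C)\otimes I_{\mathcal K}$. Its commutant is $I_4\otimes B(\mathcal K)$, so $D=I_4\otimes D_{\mathcal K}$ for some operator $D_{\mathcal K}$. Positive definiteness of $D$ then forces $D_{\mathcal K}>0$. The main care needed is to observe that $D$ commutes with the specific generators $M_{11},M_{12},M_{21},M_{22}$, which follows from the first step.
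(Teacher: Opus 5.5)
Your proposal is correct and follows essentially the same route as the paper. Weighted binary matching gives coarse projections $P_{ij}^a=Q_{ij}^a$ commuting with $D$, and the tracial Magic-Square argument then carries over verbatim. Finally, the commutant of the generated algebra $M_4(\C)\otimes I_{\mathcal K}$ forces $D=I_4\otimes D_{\mathcal K}$ with $D_{\mathcal K}>0$.
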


\begin{proof}
For each cell $(i,j)$ and sign $a\in\T$, define
\[
P_{ij}^a:=\sum_{s:s_j=a}A_s^i,
\qquad
Q_{ij}^a:=\sum_{t:t_i=a}B_t^j.
\]
Equation~\eqref{eq:schmidt-MS-zero} gives
\[
P_{ij}^+D Q_{ij}^-=P_{ij}^-D Q_{ij}^+=0.
\]
Lemma~\ref{lem:weighted-binary} implies that $P_{ij}^a=Q_{ij}^a$ and that these common operators are complementary projections commuting with $D$.  From this point, the proof of Proposition~\ref{prop:tracial-MS-rigidity} in Section~\ref{app:tracial-MS-rigidity} applies verbatim: the row and column POVMs are projective, and their observables are unitarily equivalent to the Mermin--Peres square tensored with an identity.

Every resulting cell observable $M_{ij}=P_{ij}^+-P_{ij}^-$ commutes with $D$.  These observables generate $M_4(\C)\otimes I_{\mathcal K}$, whose commutant is $I_4\otimes\mathcal B(\mathcal K)$.  Hence $D=I_4\otimes D_{\mathcal K}$ for a positive definite $D_{\mathcal K}$.
\end{proof}

\subsection{Proof of the exact self-test}\label{app:self-test-proof}

\begin{proof}[Proof of Theorem~\ref{thm:self-test}]
Compress every measurement effect to the corresponding Schmidt support and reuse the same notation.  The two supports have the same dimension.  After choosing Schmidt bases, identify both with a Hilbert space $\mathcal K_0$ and write
\begin{equation}
\ket\Psi=\operatorname{vec}(D),
\qquad
D>0,
\qquad
\Tr(D^2)=1.
\label{eq:general-Schmidt-operator}
\end{equation}
Transpose Bob's effects in the Schmidt basis, again denoting the transposed effects by $B_b^r$.  For an Alice effect $E$ and a transposed Bob effect $F$,
\begin{equation}
\Pr(E,F)=\Tr(D E D F).
\label{eq:weighted-probability}
\end{equation}
If this probability is zero, then
\[
0=\Tr(D E D F)=\bigl\|E^{1/2}D F^{1/2}\bigr\|_2^2,
\]
so
\begin{equation}
E D F=F D E=0.
\label{eq:weighted-zero}
\end{equation}
We use these weighted zero-product relations throughout.

\paragraph{Step 1: flag and Magic-Square sectors.}
Let $G_i=A_{\bot}^{R_i}$ and $H_j=B_{\bot}^{C_j}$.  For every $(i,j)$, the flag and nonflag outcomes form perfectly matching binary measurements.  Lemma~\ref{lem:weighted-binary} therefore gives
\[
G_i=H_j=:P^\perp
\qquad\text{for all }i,j,
\]
where $P^\perp$ is the common flag projection and commutes with $D$.  Set
\[
P:=I-P^\perp,
\qquad
\mathcal L:=P^\perp\mathcal K_0,
\qquad
\mathcal K_0^{\mathrm{MS}}:=P\mathcal K_0.
\]
The nonflag row and column effects are supported on $P$ and, with the restriction of $D$ to this corner, satisfy Proposition~\ref{prop:schmidt-MS-rigidity}.  After a common unitary identification,
\begin{equation}
\mathcal K_0
 \cong
\mathcal L\oplus\bigl(\C^2\otimes\C^2\otimes\mathcal K\bigr),
\qquad
D=D_{\mathrm f}\oplus\bigl(I_4\otimes D_{\mathrm m}\bigr),
\label{eq:state-sector-decomposition}
\end{equation}
where $D_{\mathrm f}>0$ on $\mathcal L$ and $D_{\mathrm m}>0$ on $\mathcal K$.  All nonflag row and column measurements are the canonical Magic-Square measurements tensored with $I_{\mathcal K}$.

For an outcome $\gamma$ of $C_1$ and an outcome $\rho$ of $R_1$, write $\Pi_\gamma$ and $\Pi_\rho$ for the corresponding canonical projections in the nonflag sector.  In particular,
\begin{equation}
\Pi_{\cans{x}}=\proj{c_x}\otimes I_{\mathcal K},
\qquad
\Pi_{\rans{y}}=\proj{r_y}\otimes I_{\mathcal K}.
\label{eq:selftest-distinguished-projections}
\end{equation}

\paragraph{Step 2: calibration constraints fix every bridge support.}
Let $E_x^+,E_x^-$ be Alice's sign effects for $T_x$, and let $F_y^+,F_y^-$ be Bob's sign effects for $U_y$.  Applying Lemma~\ref{lem:weighted-component} to the calibration constraints gives
\begin{equation}
E_x^++E_x^-=P^\perp+\Pi_{\cans{x}}=:S_x^A,
\qquad
F_y^++F_y^-=P^\perp+\Pi_{\rans{y}}=:S_y^B.
\label{eq:selftest-bridge-supports}
\end{equation}
The same lemma, applied to each singleton Magic-Square answer, shows that all remaining bridge effects are the corresponding calibrated outcome projections:
\begin{equation}
E_x^\gamma=\Pi_\gamma\quad(\gamma\ne\cans{x}),
\qquad
F_y^\rho=\Pi_\rho\quad(\rho\ne\rans{y}).
\label{eq:calibrated-singletons}
\end{equation}
Define the sign observables
\[
X_x:=E_x^+-E_x^-,
\qquad
Y_y:=F_y^+-F_y^-.
\]
They are self-adjoint contractions supported on $S_x^A$ and $S_y^B$, and
\begin{equation}
E_x^a=\frac{S_x^A+aX_x}{2},
\qquad
F_y^b=\frac{S_y^B+bY_y}{2}
\quad(a,b\in\T).
\label{eq:selftest-sign-decomposition}
\end{equation}

\paragraph{Step 3: the frustrated square fixes the bridge operators and the Schmidt ratio.}
Put
\[
s_{xy}:=(-1)^{xy},
\qquad
\alpha_{xy}:=\ip{c_x}{r_y}=\frac{s_{xy}}{\sqrt2}.
\]
For a losing sign pair, $b=-s_{xy}a$, and \eqref{eq:weighted-zero} gives
\[
(S_x^A+aX_x)D(S_y^B-s_{xy}aY_y)=0
\qquad(a\in\T).
\]
Adding and subtracting the two equations yields
\begin{align}
X_x D Y_y&=s_{xy}S_x^A D S_y^B,
\label{eq:weighted-core1}\\
X_x D S_y^B&=s_{xy}S_x^A D Y_y.
\label{eq:weighted-core2}
\end{align}

Identify $S_x^A\mathcal K_0$ with $\mathcal L\oplus\mathcal K$ by identifying $\ket{c_x}\otimes\mathcal K$ with $\mathcal K$, and identify $S_y^B\mathcal K_0$ with $\mathcal L\oplus\mathcal K$ using $\ket{r_y}\otimes\mathcal K$.  Write
\begin{equation}
X_x=
\begin{pmatrix}
\Gamma_x&U_x^*\\
U_x&\Delta_x
\end{pmatrix},
\qquad
Y_y=
\begin{pmatrix}
\Lambda_y&V_y^*\\
V_y&\Theta_y
\end{pmatrix},
\label{eq:bridge-blocks}
\end{equation}
where $U_x,V_y:\mathcal L\to\mathcal K$.  Under these identifications,
\[
S_x^A D S_y^B
=
\begin{pmatrix}
D_{\mathrm f}&0\\
0&\alpha_{xy}D_{\mathrm m}
\end{pmatrix}.
\]
Comparing blocks in \eqref{eq:weighted-core2} gives
\begin{align}
\Gamma_xD_{\mathrm f}&=s_{xy}D_{\mathrm f}\Lambda_y,
&\Delta_xD_{\mathrm m}&=s_{xy}D_{\mathrm m}\Theta_y,
\label{eq:diagonal-bridge-relations}\\
U_xD_{\mathrm f}&=\frac1{\sqrt2}D_{\mathrm m}V_y,
&U_x^*D_{\mathrm m}&=\sqrt2D_{\mathrm f}V_y^*.
\label{eq:offdiagonal-bridge-relations}
\end{align}
The four signs force
\begin{equation}
\Gamma_x=\Delta_x=\Lambda_y=\Theta_y=0
\qquad(x,y\in\{0,1\}).
\label{eq:zero-diagonal-bridge}
\end{equation}
For example, the first relation in \eqref{eq:diagonal-bridge-relations} gives $\Lambda_0=\Lambda_1$ from the two cells in the first row of the sign table and $\Lambda_0=-\Lambda_1$ from the two cells in the second row.  Invertibility of $D_{\mathrm f}$ then gives $\Gamma_x=\Lambda_y=0$.  The argument for $\Delta_x,\Theta_y$ is identical.

Equation~\eqref{eq:offdiagonal-bridge-relations} and invertibility of $D_{\mathrm f},D_{\mathrm m}$ show that $U_0=U_1=:U$ and $V_0=V_1=:V$.  Comparing the diagonal blocks in \eqref{eq:weighted-core1}, now using \eqref{eq:zero-diagonal-bridge}, gives
\begin{equation}
U^*D_{\mathrm m}V=\sqrt2D_{\mathrm f},
\qquad
UD_{\mathrm f}V^*=\frac1{\sqrt2}D_{\mathrm m}.
\label{eq:Schmidt-ratio-relations}
\end{equation}
These identities force $U$ and $V$ to be unitary.  Indeed,
\[
U^*U D_{\mathrm f}
 =\frac1{\sqrt2}U^*D_{\mathrm m}V
 =D_{\mathrm f},
\qquad
UU^*D_{\mathrm m}
 =\sqrt2UD_{\mathrm f}V^*
 =D_{\mathrm m},
\]
so $U^*U=I_{\mathcal L}$ and $UU^*=I_{\mathcal K}$.  Similarly, comparing \eqref{eq:offdiagonal-bridge-relations} with \eqref{eq:Schmidt-ratio-relations} gives $V^*V=I_{\mathcal L}$ and $VV^*=I_{\mathcal K}$.  In particular, $\mathcal L$ and $\mathcal K$ have the same dimension.

\paragraph{Step 4: extract the state and measurements.}
Let $\mathcal J:=\mathcal L$.  With the target identified as $\mathcal H\otimes\mathcal J$, define
\begin{equation}
W_A:=I_{\mathcal L}\oplus(I_4\otimes U^*),
\qquad
W_B:=I_{\mathcal L}\oplus(I_4\otimes V^*).
\label{eq:extracting-unitaries}
\end{equation}
The first summand is identified with $\C\ket\bot\otimes\mathcal J$ and the second with $(\C^2\otimes\C^2)\otimes\mathcal J$.  By \eqref{eq:Schmidt-ratio-relations},
\begin{equation}
W_A D W_B^*
 =D_{\mathrm f}\oplus\bigl(I_4\otimes U^*D_{\mathrm m}V\bigr)
 =D_0\otimes D_{\mathrm f}.
\label{eq:extracted-Schmidt-operator}
\end{equation}
Normalization of $D$ implies $9\Tr(D_{\mathrm f}^2)=1$.  Hence
\[
D_{\mathrm{aux}}:=3D_{\mathrm f}
\]
has Hilbert--Schmidt norm one.  Taking $V_A=W_A$, $V_B=\overline{W_B}$ in the original Schmidt bases, and $\ket{\mathrm{aux}}=\operatorname{vec}(D_{\mathrm{aux}})$, equation~\eqref{eq:extracted-Schmidt-operator} gives \eqref{eq:self-tested-state}, up to the canonical regrouping of tensor factors.

It remains to identify the measurements.  The row and column measurements are canonical by Proposition~\ref{prop:schmidt-MS-rigidity}, and \eqref{eq:calibrated-singletons} identifies every nonsign bridge outcome.  By \eqref{eq:zero-diagonal-bridge},
\[
X_x=
\begin{pmatrix}0&U^*\\ U&0\end{pmatrix}
\quad\text{on }\mathcal L\oplus(\ket{c_x}\otimes\mathcal K).
\]
Conjugating by $W_A$ turns this into
\[
\bigl(\ket\bot\!\bra{c_x}+\ket{c_x}\!\bra\bot\bigr)\otimes I_{\mathcal J}.
\]
Together with \eqref{eq:selftest-sign-decomposition}, this gives
\[
W_AE_x^aW_A^*=\proj{t_x^a}\otimes I_{\mathcal J}.
\]
The same argument with $Y_y,V$ gives the ideal $U_y$ effects in the transposed Bob picture.  Taking complex conjugates returns Bob's physical measurements.  This proves \eqref{eq:self-tested-measurements}.
\end{proof}
\section{Flagging suitable base games}
\label{app:flagged-base-games}

\paragraph{Suitable base games.}
Let $G$ be a finite bipartite game with question sets $X,Y$, fixed answer sets $\mathcal A,\mathcal B$, and a full-support question distribution. By adding answers that always lose if necessary, we may assume $|\mathcal A|=|\mathcal B|$. We call $G$ a \emph{suitable base game at $c$} if:
\begin{enumerate}
\item[(i)] $G$ is a pseudo-telepathy game that exactly self-tests a value-one projective strategy
\[
 \bigl(|\Phi_d\rangle,\{P^q_a\}_{q,a},
       \{\overline{Q^r_b}\}_{r,b}\bigr),\qquad
 |\Phi_d\rangle=\frac1{\sqrt d}\sum_{j=1}^d|j,j\rangle.
\]
The self-test is required for every finite-dimensional perfect strategy with a pure shared state and arbitrary POVMs. It includes both the state and the measurements on their Schmidt supports, up to local isometries and an auxiliary factor.

\item[(ii)] The canonical measurements have distinguished rank-one outcomes at questions $r_x\in Y$, $q_y\in X$:
\[
 Q^{r_x}_{b_x}=|w_x\rangle\langle w_x|,\qquad
 P^{q_y}_{a_y}=|v_y\rangle\langle v_y|,
 \qquad x,y\in\{0,1\},
\]
whose unit representatives can be phased so that
\begin{equation}
 \langle w_x|v_y\rangle=c\,s_{xy},\qquad
 s_{xy}=(-1)^{xy},\qquad 0<c<1.
 \label{eq:base-cross-overlaps}
\end{equation}
The questions $r_0,r_1$ may differ, as may $q_0,q_1$. Neither pair $w_0,w_1$ nor $v_0,v_1$ is required to be orthogonal.
\end{enumerate}
Here $v$ denotes Alice's canonical directions and $w$ Bob's unconjugated directions, as in \eqref{eq:general-flagging-overlaps}. The reversed ordering in \eqref{eq:base-cross-overlaps} is equivalent because the sign matrix is real and symmetric. For unit vectors, this overlap pattern forces $c\le 1/\sqrt2$, so $c<1$ imposes no additional restriction. Indeed, Cauchy--Schwarz gives
\[
 8c^2\le\|w_0+w_1\|^2+\|w_0-w_1\|^2=4.
\]

\paragraph{The flagged game.}
For fixed choices in (ii), form $G^{\mathrm{flag}}$ as in Section~\ref{sec:game}: add the flag answer $\bot$ to every base question, and add bridge questions $T_0,T_1$ for Alice and $U_0,U_1$ for Bob. On every base question pair, $(\bot,\bot)$ wins, exactly one flag loses, and two nonflag answers are tested by $G$. The bridge answer sets are
\[
 \{+1,-1\}\sqcup(\mathcal B\setminus\{b_x\}),\qquad
 \{+1,-1\}\sqcup(\mathcal A\setminus\{a_y\}).
\]
Use the calibration rules of Section~\ref{sec:game} at $(T_x,r_x)$ and $(q_y,U_y)$: signs match $\{\bot,b_x\}$ and $\{\bot,a_y\}$, respectively, and every nonsign bridge answer must equal the base answer. On $(T_x,U_y)$ retain the frustrated sign rule: two signs $a,b$ win exactly when $ab=s_{xy}$, and any nonsign answer wins automatically. All other cells are free. Use any full-support distribution on the enlarged question product.

On $\mathcal H_c=\mathbb C|\bot\rangle\oplus\mathbb C^d$, define
\begin{equation}
 D_c=1\oplus c^{-1}I_d,\qquad
 |\psi_c\rangle=
 \frac{\operatorname{vec}(D_c)}{\sqrt{1+dc^{-2}}}
 =\frac{c|\bot,\bot\rangle+\sum_{j=1}^d|j,j\rangle}
 {\sqrt{c^2+d}}.
 \label{eq:flagged-target}
\end{equation}
The ideal measurements are those of Section~\ref{sec:honest-strategy} with canonical core measurements $P,Q$. Each bridge replaces the flag and its selected outcome by rank-one projections onto the vectors
\[
 |t_x^a\rangle=\frac{|\bot\rangle+a|w_x\rangle}{\sqrt2},\qquad
 |u_y^b\rangle=\frac{|\bot\rangle+b|v_y\rangle}{\sqrt2},
 \qquad a,b\in\{+1,-1\},
\]
and retains the other effects of its calibrating measurement. Bob uses the complex conjugates of his ideal effects.

\begin{theorem}[Perfect play and tracial separation]
\label{thm:flagged-base-separation}
Let $G$ be suitable at $c$. Then $\omega^*(G^{\mathrm{flag}})=1$, and no finite-dimensional maximally entangled strategy wins $G^{\mathrm{flag}}$ perfectly. If, additionally, the self-testing rigidity extends to perfect tracial realizations of $G$---in every faithful Hilbert-space representation, their two POVM families are simultaneously equivalent, via a common unitary, to $P^q_a\otimes I_{\mathcal K}$ and $Q^r_b\otimes I_{\mathcal K}$, with arbitrary-dimensional multiplicity---then
\[
 \omega^*(G^{\mathrm{flag}})=1>
 \omega_{\mathrm{tr}}(G^{\mathrm{flag}})
 \geq\omega_{\mathrm{ME}}(G^{\mathrm{flag}}).
\]
\end{theorem}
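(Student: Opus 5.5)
The proof follows the Flagged Magic-Square template in three parts: completeness, the finite-dimensional maximally entangled obstruction, and the tracial obstruction. \emph{Completeness.} Work in the transposed Bob picture. Winning probabilities are $\Tr(D_cED_cF)$, so it suffices to check that every losing pair has $\bra{v}D_c\ket{w}=0$ for the rank-one effects involved, exactly as in Proposition~\ref{prop:perfect}. The checks are as follows.
\begin{itemize}
\item On base question pairs, the flag and core sectors are orthogonal, and $D_c$ acts on the core as the scalar $c^{-1}$. The losing core amplitudes therefore vanish because $(|\Phi_d\rangle,P,\overline Q)$ is perfect for $G$.
\item The calibration cells $(T_x,r_x)$ and $(q_y,U_y)$ hold because $t_x^a$ is supported on $\ket\bot$ and $\ket{w_x}$. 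Since $Q^{r_x}$ is projective with $Q^{r_x}_{b_x}$ rank one, the retained effects are orthogonal to $w_x$.
\item In the frustrated cells,
\[
\bra{t_x^a}D_c\ket{u_y^b}=\tfrac12\bigl(1+c^{-1}ab\ip{w_x}{v_y}\bigr)=\tfrac12(1+ab\,s_{xy}),
\]
using \eqref{eq:base-cross-overlaps} and the realness of the sign matrix. This vanishes exactly on losing pairs.
\end{itemize}
Hence $\omega^*(G^{\mathrm{flag}})=1$.

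\emph{No perfect finite-dimensional ME strategy.} I would redo Appendix~\ref{app:self-test-proof} with a general $D>0$.
\begin{itemize}
\item Lemma~\ref{lem:weighted-binary} applied to flag versus nonflag on every base pair produces a common flag projection $P^\perp$ commuting with $D$.
\item On $P$, the nonflag effects form a perfect strategy for $G$ with Schmidt operator $PDP$. By the self-testing hypothesis (i), after a unitary they equal $P^q_a\otimes I_{\mathcal K}$, with $D|_P=I_d\otimes D_{\mathrm m}$.
\item This last identification is the point needing care. Hypothesis (i) gives the state as $|\Phi_d\rangle\otimes|\mathrm{aux}\rangle$ after local unitaries $V_A,V_B$ on the Schmidt supports. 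I would absorb $V_B V_A^{*}$ into Bob's frame, as was done with $U,V$ in Step~4 of the self-test, to get a common identification in which $D=I_d\otimes D_{\mathrm m}$ up to that relabeling.
\item Lemma~\ref{lem:weighted-component} then gives $S_x^A=P^\perp+\proj{w_x}\otimes I$ and $S_y^B=P^\perp+\proj{v_y}\otimes I$.
\item The block computation \eqref{eq:weighted-core1}--\eqref{eq:Schmidt-ratio-relations} goes through verbatim with $\alpha_{xy}=c\,s_{xy}$ in place of $s_{xy}/\sqrt2$. It yields unitaries $U,V$ with $U^*D_{\mathrm m}V=c^{-1}D_{\mathrm f}$.
\end{itemize}
If $D$ were scalar, say $D=\lambda I$, this would force $c^{-1}=1$, contradicting $c<1$. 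The remaining cases are also contradictory: if $P^\perp=0$, the binary-matching step of Step~4 of Theorem~\ref{thm:FMS-separation} applies, since $c\neq0$; if $P=0$, Step~3 applies.

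\emph{Tracial separation.} Assume a perfect tracial strategy with a faithful $\tau$, represented faithfully, so that losing pairs give zero products. I would follow the proof of Theorem~\ref{thm:FMS-separation} step by step.
\begin{itemize}
\item Step~1 isolates $P^\perp$. The extra rigidity hypothesis replaces Proposition~\ref{prop:tracial-MS-rigidity}, giving $\Pi_{b_x}=\proj{w_x}\otimes I_{\mathcal K}$ and $\Pi_{a_y}=\proj{v_y}\otimes I_{\mathcal K}$, with $\|\Pi_{b_x}\Pi_{a_y}\|\le c<1$.
\item Step~2 uses Lemma~\ref{lem:component-matching} for the bridge supports.
\item Step~3 compresses \eqref{eq:core1}--\eqref{eq:core2} by $P^\perp$. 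This gives $\|P^\perp\|\le c<1$, hence $P^\perp=0$.
\item Step~4 factors the sign effects through $\mathcal K$ and uses $s_{xy}c\ne0$ with Lemma~\ref{lem:binary-matching} to obtain $\widehat F_1^+=\widehat F_1^-$, which forces $\mathcal K=0$ and hence $I=0$.
\end{itemize}
Attainment of the tracial value then gives $\omega_{\mathrm{tr}}<1$, and $\omega_{\mathrm{ME}}\le\omega_{\mathrm{tr}}$ was already established. I expect the main obstacle to be the frame alignment in the second part. There, one must extract from the self-test, which is stated only up to local isometries, the commuting form $D|_P=I_d\otimes D_{\mathrm m}$ with Alice's and Bob's core effects in a \emph{common} frame, as Proposition~\ref{prop:schmidt-MS-rigidity} provided for Magic-Square.
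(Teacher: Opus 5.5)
\textbf{Overall.} Your proposal follows the paper's route. The completeness check is the paper's. The tracial part is the paper's sketch: rerun Theorem~\ref{thm:FMS-separation} with the extra hypothesis in place of Magic-Square rigidity and with $c$ in place of $1/\sqrt2$. For the maximally entangled part, you inline the block computation that the paper packages as Theorem~\ref{thm:flagged-base-self-test}. You then finish by playing $U^*D_{\mathrm m}V=c^{-1}D_{\mathrm f}$ against unitarity, whereas the paper observes that $D_c\otimes D_{\mathrm f}$ cannot be flat. These two endings are equivalent.

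\textbf{The gap.} The step you flag as the main obstacle is left open, and your fix does not work as phrased. Suppose you conjugate everything by Alice's frame and ``absorb'' the discrepancy into Bob. Then Bob's core effects become $Z_AZ_B^*(Q^r_b\otimes I)Z_BZ_A^*$, which are not canonical. You therefore lose $S_y^B=P^\perp+\proj{v_y}\otimes I$, and the block computation breaks.

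The missing idea is that the discrepancy is trivial. Let $Z_A,Z_B$ be the core unitaries in the transposed picture, chosen via Schmidt bases of the auxiliary state so that $Z_AD_{\mathrm{core}}Z_B^*=I_d\otimes D_{\mathrm m}>0$. The factorization $Z_AD_{\mathrm{core}}Z_B^*=(Z_AZ_B^*)(Z_BD_{\mathrm{core}}Z_B^*)$ is then a polar decomposition of a positive definite operator. Uniqueness of polar decomposition for invertible operators gives $Z_AZ_B^*=I$. This is exactly how the paper obtains the common frame \eqref{eq:flagged-sector-form}.

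Alternatively, you can keep two frames $W_A=I_{\mathcal L}\oplus Z_A$ and $W_B=I_{\mathcal L}\oplus Z_B$. Every constraint has the form $EDF=0$, so it becomes $E'D'F'=0$ with $D'=W_ADW_B^*$ positive definite, and the weighted lemmas still apply. This is the legitimate version of your ``as in Step~4''.

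For the scalar case you actually need, the alignment is immediate. Here $\lambda Z_AZ_B^*=I_d\otimes D_{\mathrm m}$ is a positive multiple of a unitary, so $Z_A=Z_B$ and $D_{\mathrm m}=\lambda I$.

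\textbf{Minor point.} Your separate handling of $P=0$ and $P^\perp=0$ is fine but unnecessary. The relations $U^*UD_{\mathrm f}=D_{\mathrm f}$ and $UU^*D_{\mathrm m}=D_{\mathrm m}$ already force $\mathcal L$ and $\mathcal K$ to vanish together, and $\Tr(D^2)=1$ excludes that.
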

\begin{proof}[Proof sketch]
The completeness calculation of Proposition~\ref{prop:perfect} applies. The canonical base strategy gives $P_a^qQ_b^r=0$ for every losing base pair, and calibration uses the mutually orthogonal projections of the corresponding canonical measurement. For the sign constraints, $\langle t_x^a|D_c|u_y^b\rangle=(1+ab\,s_{xy})/2$. Theorem~\ref{thm:flagged-base-self-test} below rules out perfect maximally entangled play: tensoring the unequal Schmidt coefficients of $|\psi_c\rangle$ with any nonzero auxiliary spectrum cannot make them all equal. Under the additional tracial hypothesis, repeat the proof of Theorem~\ref{thm:FMS-separation} in a faithful GNS representation, using that hypothesis in place of Proposition~\ref{prop:tracial-MS-rigidity} and replacing the overlap bound $1/\sqrt2$ in \eqref{eq:projection-overlap} by $c<1$. Step~3 eliminates the flag sector, and Step~4, using the nonzero cross-overlaps, eliminates the core sector. Attainment of the tracial value and $\omega_{\mathrm{ME}}\leq\omega_{\mathrm{tr}}$ (Section~\ref{sec:notation}) give the strict separation.
\end{proof}

\begin{theorem}[Exact state self-test]
\label{thm:flagged-base-self-test}
Let $G$ be suitable at $c$. Every finite-dimensional perfect strategy for $G^{\mathrm{flag}}$ with a pure shared state $|\Psi\rangle$ admits unitary identifications $V_A,V_B$ of its Schmidt supports with $\mathcal H_c\otimes\mathcal J$, for a finite-dimensional $\mathcal J$, such that
\[
 (V_A\otimes V_B)|\Psi\rangle
 =|\psi_c\rangle\otimes|\mathrm{aux}\rangle
\]
for a unit vector $|\mathrm{aux}\rangle\in\mathcal J\otimes\mathcal J$.
\end{theorem}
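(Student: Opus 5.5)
The plan is to follow the proof of Theorem~\ref{thm:self-test} step by step. The role of Proposition~\ref{prop:schmidt-MS-rigidity} will be played by the self-testing hypothesis (i), and the overlap $1/\sqrt2$ will be replaced by $c$.

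First, write $\ket\Psi=\operatorname{vec}(D)$ with $D>0$ on the common Schmidt support $\mathcal K_0$, and transpose Bob's effects. Every losing pair then gives $EDF=FDE=0$ as in \eqref{eq:weighted-zero}. On each base cell, the flag/nonflag coarse-graining is a perfectly matching binary pair, so Lemma~\ref{lem:weighted-binary} gives a common flag projection $P^\perp$ commuting with $D$. The complementary sector carries a perfect strategy for $G$ with Schmidt operator $PDP$.

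Next, I would invoke (i). Since the self-test covers the state and the measurements on the Schmidt supports up to local isometries and an auxiliary factor, we obtain $P\mathcal K_0\cong\C^d\otimes\mathcal K$. Under this identification the nonflag effects are $P^q_a\otimes I$ and $Q^r_b\otimes I$ (after transposition), and $D|_P=I_d\otimes D_{\mathrm m}$ because the canonical state is maximally entangled. The one subtlety is that (i) identifies Alice's and Bob's sides by possibly different unitaries. Since both are expressed in the same Schmidt basis and the canonical state is $\ket{\Phi_d}$, the two identifications differ only on the auxiliary factor. Absorbing that difference yields the common form $I_d\otimes D_{\mathrm m}$, with $D_{\mathrm m}$ positive definite. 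I expect this bookkeeping to be the main obstacle: converting an abstract self-test statement into a single identification in which $D$ has this block form. I would handle it by applying the self-test to the compressed strategy and comparing $\operatorname{vec}(D|_P)$ with $\ket{\Phi_d}\otimes\ket{\mathrm{aux}}$ under the isometries.

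With this in hand, apply Lemma~\ref{lem:weighted-component} to the calibration cells $(T_x,r_x)$ and $(q_y,U_y)$. This gives $E_x^++E_x^-=P^\perp+\proj{w_x}\otimes I$ and $F_y^++F_y^-=P^\perp+\proj{v_y}\otimes I$, and the nonsign bridge outcomes become the calibrated projections. Then rerun Step~3 of Theorem~\ref{thm:self-test} using the sign observables $X_x,Y_y$ in block form \eqref{eq:bridge-blocks}. The off-diagonal compression now reads $\alpha_{xy}=\ip{w_x}{v_y}=c\,s_{xy}$. As before, the sign pattern kills the diagonal blocks and forces $U_0=U_1$, $V_0=V_1$. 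The analogues of \eqref{eq:offdiagonal-bridge-relations} and \eqref{eq:Schmidt-ratio-relations} then become
\[
U D_{\mathrm f}=cD_{\mathrm m}V,\qquad U^*D_{\mathrm m}V=c^{-1}D_{\mathrm f},
\]
which forces $U,V$ to be unitary between $\mathcal L$ and $\mathcal K$.

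Finally, define $W_A=I_{\mathcal L}\oplus(I_d\otimes U^*)$ and $W_B=I_{\mathcal L}\oplus(I_d\otimes V^*)$. A direct computation gives $W_ADW_B^*=D_c\otimes D_{\mathrm f}$. Normalization then fixes $\Tr D_{\mathrm f}^2=(1+dc^{-2})^{-1}$, so $\ket{\mathrm{aux}}=\operatorname{vec}(\sqrt{1+dc^{-2}}\,D_{\mathrm f})$ is a unit vector. Taking $V_A=W_A$ and $V_B=\overline{W_B}$ gives $(V_A\otimes V_B)\ket\Psi=\ket{\psi_c}\otimes\ket{\mathrm{aux}}$.
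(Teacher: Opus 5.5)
Your proposal is correct and follows the paper's proof essentially step by step: the common flag projection from weighted binary matching, hypothesis (i) applied to the nonflag corner, weighted component matching at the calibration cells, the frustrated-square block analysis with $c$ in place of $1/\sqrt2$, and the same extraction unitaries $W_A,W_B$. The paper handles the bookkeeping step you single out exactly as you propose. Comparing $\operatorname{vec}(D_{\mathrm{core}})$ with $|\Phi_d\rangle\otimes|\mathrm{aux}\rangle$ gives $Z_AD_{\mathrm{core}}Z_B^*=I_d\otimes D_m>0$, and uniqueness of the polar decomposition (since $D_{\mathrm{core}}>0$) then forces $Z_A=Z_B$; your write-up skips the degenerate case of a vanishing flag or core sector, which (i) needs excluded, and the paper rules it out at the end because the unitarity of $U,V$ forces $\dim\mathcal L=\dim\mathcal K$.
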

\begin{proof}
Compress to the Schmidt supports, identify them with $\mathcal H_0$, and transpose Bob's effects, as in Appendix~\ref{app:self-testing}. Thus $|\Psi\rangle=\operatorname{vec}(D)$ with $D>0$ and $\operatorname{Tr}(D^2)=1$, and every losing pair of effects satisfies $EDF=0$, by \eqref{eq:weighted-zero}.

\emph{Flag, core, and bridge supports.} Weighted binary matching (Lemma~\ref{lem:weighted-binary}), applied to every base question pair, identifies all flag effects with a projection $P^\perp$ commuting with $D$. If the nonflag corner is nonzero, its normalized state and nonflag effects give a perfect strategy for $G$. By (i), the self-testing isometries restrict to local unitaries on its Schmidt supports once the auxiliary state is restricted to its own Schmidt supports. Denote these unitaries by $Z_A,\overline{Z_B}$. They take the measurements to their canonical forms and, after choosing Schmidt bases for the auxiliary state, satisfy
\[
 Z_A D_{\mathrm{core}} Z_B^*=I_d\otimes D_m>0,
 \qquad D_{\mathrm{core}}=D|_{(I-P^\perp)\mathcal H_0}.
\]
Since $D_{\mathrm{core}}>0$, the polar unitary on the left is $Z_AZ_B^*$. Positivity therefore gives $Z_A=Z_B$. Thus, after a common unitary identification,
\begin{equation}
 \mathcal H_0=\mathcal L\oplus(\mathbb C^d\otimes\mathcal K),\qquad
 D=D_f\oplus(I_d\otimes D_m),
 \label{eq:flagged-sector-form}
\end{equation}
where $\mathcal L=P^\perp\mathcal H_0$, and $D_f,D_m$ are positive definite on their respective spaces. The nonflag effects are canonical. Zero summands are allowed provisionally. Writing $\Pi_{w_x}=|w_x\rangle\langle w_x|\otimes I_{\mathcal K}$ and $\Pi_{v_y}=|v_y\rangle\langle v_y|\otimes I_{\mathcal K}$ in the core, weighted component matching (Lemma~\ref{lem:weighted-component}), applied separately at each calibration cell, gives
\begin{equation}
 E_x^++E_x^-=P^\perp+\Pi_{w_x}=:S_x^A,\qquad
 F_y^++F_y^-=P^\perp+\Pi_{v_y}=:S_y^B.
 \label{eq:flagged-supports}
\end{equation}
Here $E_x^a,F_y^b$ are the sign effects. Set $X_x=E_x^+-E_x^-$ and $Y_y=F_y^+-F_y^-$. These self-adjoint contractions are supported on $S_x^A,S_y^B$.

\emph{The frustrated square.} Adding and subtracting the two losing relations $E_x^aDF_y^{-s_{xy}a}=0$ gives
\begin{equation}
 X_xDY_y=s_{xy}S_x^ADS_y^B,\qquad
 X_xDS_y^B=s_{xy}S_x^ADY_y.
 \label{eq:flagged-sign-identities}
\end{equation}
Identify the two bridge supports with $\mathcal L\oplus\mathcal K$ using $w_x$ and $v_y$, respectively. In these identifications write
\[
 X_x=\begin{pmatrix}\Gamma_x&U_x^*\\U_x&\Delta_x\end{pmatrix},\qquad
 Y_y=\begin{pmatrix}\Lambda_y&V_y^*\\V_y&\Theta_y\end{pmatrix},\qquad
 S_x^ADS_y^B=
 \begin{pmatrix}D_f&0\\0&c\,s_{xy}D_m\end{pmatrix},
\]
where $U_x,V_y:\mathcal L\to\mathcal K$. The second identity in \eqref{eq:flagged-sign-identities} yields
\begin{align}
 \Gamma_xD_f&=s_{xy}D_f\Lambda_y,
 &\Delta_xD_m&=s_{xy}D_m\Theta_y,
 \label{eq:flagged-diagonal-blocks}\\
 U_xD_f&=cD_mV_y,
 &U_x^*D_m&=c^{-1}D_fV_y^*.
 \label{eq:flagged-off-diagonal-blocks}
\end{align}
The two rows of the sign table in \eqref{eq:flagged-diagonal-blocks} imply both $\Lambda_0=\Lambda_1$ and $\Lambda_0=-\Lambda_1$, and likewise for $\Theta_0,\Theta_1$. Invertibility therefore gives $\Gamma_x=\Delta_x=\Lambda_y=\Theta_y=0$. Equation~\eqref{eq:flagged-off-diagonal-blocks} also gives $U_0=U_1=:U$ and $V_0=V_1=:V$. The first identity in \eqref{eq:flagged-sign-identities} now becomes
\begin{equation}
 U^*D_mV=c^{-1}D_f,\qquad UD_fV^*=cD_m.
 \label{eq:flagged-weight-transfer}
\end{equation}
Combining \eqref{eq:flagged-off-diagonal-blocks} and \eqref{eq:flagged-weight-transfer} gives
\[
 \begin{aligned}
 U^*UD_f&=D_f, & UU^*D_m&=D_m,\\
 D_fV^*V&=D_f, & D_mVV^*&=D_m.
 \end{aligned}
\]
Thus $U,V:\mathcal L\to\mathcal K$ are unitaries. In particular, $\mathcal L$ and $\mathcal K$ have the same dimension. Neither can vanish, since otherwise both vanish, contradicting $\operatorname{Tr}(D^2)=1$.

\emph{Extraction.} Set $\mathcal J=\mathcal L$ and, with codomain $\mathcal H_c\otimes\mathcal J$, define
\[
 W_A=I_{\mathcal L}\oplus(I_d\otimes U^*),\qquad
 W_B=I_{\mathcal L}\oplus(I_d\otimes V^*).
\]
By \eqref{eq:flagged-weight-transfer},
\[
 W_ADW_B^*
 =D_f\oplus(I_d\otimes U^*D_mV)
 =D_c\otimes D_f.
\]
Consequently $(1+dc^{-2})\operatorname{Tr}(D_f^2)=1$, so $|\mathrm{aux}\rangle=\sqrt{1+dc^{-2}}\operatorname{vec}(D_f)$ is a unit vector. The vectorization identity gives
\[
 (W_A\otimes\overline{W_B})\operatorname{vec}(D)
 =|\psi_c\rangle\otimes|\mathrm{aux}\rangle,
\]
up to the canonical regrouping of tensor factors. Composing with the earlier Schmidt-basis changes and common unitary identification gives $V_A,V_B$ as claimed.
\end{proof}
\section*{Acknowledgements}
The author thanks Olivier Lalonde and David Roberson for helpful conversations. Converstaions with ChatGPT 5.6 Sol have been used in this work. The author is supported by the European Union via an ERC grant (QInteract, Grant No 101078107).

\end{document}